\newtheorem{Theorem}{Theorem}
\newtheorem{prop}{Proposition}[section]
\newtheorem{Lemma}[prop]{Lemma}
\theoremstyle{definition}
\newtheorem{ex}{EXAMPLE}[section]
\newtheorem{rem}{Remark}[section]
\newtheorem{defn}[prop]{Definition} 
\newtheorem{assu}{Assumptions\hfill \\}[section]
\newcommand{\C}{\mathbb{C}}
\newcommand{\Mbb}{\mathbb{M}}
\newcommand{\Abb}{\mathbb{A}}
\newcommand{\Vcal}{\mathcal{V}}
\newcommand{\Acal}{\mathcal{A}}
\newcommand{\Jcal}{\mathcal{J}}
\newcommand{\deq}{\overset{\mathrm{def}}{=}}
\newcommand{\Fcal}{\mathcal{F}}
\newcommand{\Lcal}{\mathcal{L}}
\newcommand{\Hbb}{\mathbb{H}}
\newcommand{\N}{\mathbb{N}}
\newcommand{\R}{\mathbb{R}}
\newcommand{\eps}{\varepsilon}
\newcommand{\alp}{\alpha}
\newcommand{\un}{\mathds{1}}
\renewcommand{\leq}{\leqslant}
\renewcommand{\geq}{\geqslant}
\newcommand{\defeq}{\stackrel{\rm{def}}{=}}
\newcommand{\Kcal}{\mathcal{K}}
\newcommand{\id}{\mathrm{id}}
\newcommand{\tl}{\triangleleft}
\newcommand{\tr}{\triangleright}
\newcommand{\api}{a^{+,\mathrm{int}}}
\newcommand{\ape}{a^{+,\mathrm{ext}}}
\newcommand{\ami}{a^{-,\mathrm{int}}}
\newcommand{\ame}{a^{-,\mathrm{ext}}}
\newcommand{\psie}[1]{\psi_h^{{#1},\mathrm{ext}}}
\newcommand{\psii}[1]{\psi_h^{{#1},\mathrm{int}}}
\newcommand{\fpep}{\mathfrak{p}^{+,\mathrm{ext}}}
\newcommand{\fpem}{\mathfrak{p}^{-,\mathrm{ext}}}
\newcommand{\fpip}{\mathfrak{p}^{+,\mathrm{int}}}
\newcommand{\fqip}{\mathfrak{q}^{+,\mathrm{int}}}
\newcommand{\fpim}{\mathfrak{p}^{-,\mathrm{int}}}
\newcommand{\fqim}{\mathfrak{q}^{-,\mathrm{int}}}
  \title{Cauchy Data for 1D singular Schr\"odinger operators }
\author[L. Hillairet]
{Luc Hillairet}
\email{luc.hillairet@math.univ-orleans.fr}
\address{Institut Denis Poisson, \\
Orleans, France}
  \author[J.L. Marzuola]
{Jeremy L. Marzuola}
\email{marzuola@math.unc.edu}
\address{Mathematics Department, University of North Carolina \\
CB\#3255, Phillips Hall, Chapel Hill, NC USA}
\begin{document}
  
  \begin{abstract}
    We study semiclassical 1-D Schrödinger operators
    of the form $Pu = -h^2 u'' \,+\,x^\gamma W(x) u$ on a finite interval $[0,b]$ for $0 < \gamma \in \mathbb{R} \setminus \mathbb{Q}$. We show that that the
    WKB expansions of solution can be extended on $[h^{1-\eps},b]$, for any $\eps>0$. Using a different approximation near $0$
    and a matching procedure, we obtain the Cauchy Data at $0$ of such WKB solutions. This allows us to derive singular
    Bohr-Sommerfeld rules. We also pay special attention to uniformity in $W$ for our expansions.   
\end{abstract}
\maketitle

\section{Introduction}
We consider a self-adjoint realization of the one dimensional semiclassical Schr\"odinger operator 
\[
P_h u \,=\, -h^2 u''\,+\, V(x) u
\] 
that is defined on a interval $I\,=\,[0,b]$ with some boundary condition at $0$, $b$ and the potential $V$ is
defined by $x\mapsto x_+^\gamma W(x)$ for some $0 < \gamma \in \mathbb{R} \setminus \mathbb{Q}$ and smooth $W$. 
The eigenvalue equation 
\begin{equation}\label{eq:eigeq}
P_h u_h=E_h u_h, 
\end{equation} 
can be studied by asking that the Cauchy data at $0$ and $b$ of a solution $u_h$ (i.e. $(u_h(0),hu_h'(0))$ and $(u_h(b),hu_h'(b))$)
satisfy the boundary conditions. This approach requires to relate the two Cauchy data at both ends of the interval.
In smooth settings, WKB expansions can be used to make this relation explicit leading to the so-called Bohr-Sommerfeld rules, see \cite{Bender_Orszag78,de2005bohr} e.g..
In settings for which $0$ exhibits some kind of singularity, it is sometimes useful to split the interval
$[0,b]$ and to use expansions in $[0,b_h]$ and $[a_h,b]$ respectively.
Typically, the methods used to obtain the expansions
on $[0,b_h]$ and $[a_h,b]$ will be different and valid only in some regime (i.e. for some choice of $a_h,b_h$).
In order to obtain a full answer, it is then crucial that the latter intervals overlap for some choice of $a_h$ and $b_h$.
When the singularity is at $0$, the interval $[0,b_h]$ is usually called \textit{the interior region} or
\textit{the boundary layer}, the interval $[a_h,b]$ is \textit{the exterior region} and the interval
$[a_h,b_h]$ is \textit{the matching region}.

Obtaining the matching region usually requires to go beyond standard analysis. More precisely,
in our case, we will use WKB expansions in $[a_h,b]$. For the latter, the regime for which $a_h$ of order $h^0$ is
standard. The main task is thus to push the method further so as to obtain expansions valid on $[h^\alpha,b]$ for some
positive $\alpha$. On $[0,b_h]$ we will use the classical variation of constants method.
In both cases, we will obtain joint asymptotic expansions for $(u(x_h), hu(x_h))$ when $h$ and $x_h$ go to zero. 

Our main result is as follows.  Let $\phi_h$ be a solution to our Schrödinger equation. Since the space of solutions is of dimension $2$, there is a
linear relation between the Cauchy datum at $0$ and the Cauchy datum at $b$. The idea of matching is
to use an intermediate interval $[h^{1-\eps_0},h^{1-\eps_1}]$ where $0 < \eps_1 < \eps_0 < 1$. For $x$ in this interval, we use
the interior solutions that we will construct to relate the Cauchy data at $0$ and $x_h$ and
 WKB solutions to relate the Cauchy data at $x_h$ and $b$. Basic linear algebra and careful asymptotic
analysis will then yield the following theorem.

\begin{Theorem}\label{thm:matching}
  Take $0<\gamma \notin \mathbb{Q}$.  There exist matrices $\mathbb{A}^{\pm}_h(E)$ that admit
  an asymptotic expansion with exponent set $\{m\gamma +n ,~m\geq 0, n\geq 0\}\setminus \{0\}$ such that
  for any solution $\phi$  to the Schrödinger equation \eqref{eq:eigeq}, the following relation holds:
  \begin{multline*}
    \begin{pmatrix}
      E^{\frac{1}{4}}\phi(0)\\
      E^{-\frac{1}{4}}h\phi'(0)
    \end{pmatrix}
    \,=\,\\
    \left (D_h\,+\,\cos (\frac{\sigma_E}{h})\cdot \Abb^+_h(E)\,+\,\sin (\frac{\sigma_E}{h})\cdot \Abb^-_h(E)\right)
    \begin{pmatrix}
      (E-V(b))^{\frac{1}{4}}\phi(b)\\
      (E-V(b))^{-\frac{1}{4}}h\phi'(b)
    \end{pmatrix}
  \end{multline*}
  where
  \[
    D_h\,=
    \begin{pmatrix}
      \cos \frac{\sigma_E}{h} & -\sin \frac{\sigma_E}{h} \\
      \sin\frac{\sigma_E}{h} & \cos \frac{\sigma_E}{h}
    \end{pmatrix}
    ~~\text{and}~~
    \sigma_E\,=\,\int_0^b \sqrt{E-V(y)}\, dy.
 \]
\end{Theorem}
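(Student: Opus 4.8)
The plan is to realize the full transfer from $b$ to $0$ as a product of two transfer matrices across an intermediate point $x_h=h^{1-\eps}$, with $\eps_1<\eps<\eps_0$, and then to exploit the fact that this product cannot depend on $x_h$ in order to force the cancellation of every $\eps$-dependent term, leaving an honest expansion with exponent set $\{m\gamma+n\}$.

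First I would fix such an $x_h$ and record two inputs. On $[x_h,b]$ (for a non critical $E$ one has $E-V>0$ on $[0,b]$, so $(E-V)^{\pm 1/4}$ and the WKB phase are real), the extended WKB solutions constructed earlier give the joint asymptotics of the rotated Cauchy datum $\mathbf{w}(x)=\bigl((E-V(x))^{1/4}\phi(x),\,(E-V(x))^{-1/4}h\phi'(x)\bigr)^{t}$, hence a matrix $\Tcal^{\mathrm{ext}}_h$ with $\mathbf{w}(x_h)=\Tcal^{\mathrm{ext}}_h\,\mathbf{w}(b)$ whose leading term is the rotation $R\bigl(h^{-1}\!\int_{x_h}^{b}\sqrt{E-V}\bigr)$ and whose corrections are governed by the WKB transport functions. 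On $[0,x_h]$ the classical variation-of-constants (Volterra) construction around $-h^2u''=Eu$ gives a matrix $\Tcal^{\mathrm{int}}_h$ with $\mathbf{w}(0)=\Tcal^{\mathrm{int}}_h\,\mathbf{w}(x_h)$ (using $V(0)=0$), whose leading term is the free rotation $R\bigl(\sqrt{E}\,x_h/h\bigr)$ and whose $m$-th Volterra term carries exactly $m$ factors of $V(y)=y^{\gamma}W(y)$; the ensuing $y$-integrals are the oscillatory integrals $\int_0^{x_h}y^{\gamma+k}e^{\pm i\sqrt{E}y/h}\,dy$, each of which splits into a convergent ``bulk'' part of size $h^{\gamma+k+1}$ and an $\eps$-dependent endpoint part at $x_h$.

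Next I would multiply: $\mathbf{w}(0)=\Tcal^{\mathrm{int}}_h\,\Tcal^{\mathrm{ext}}_h\,\mathbf{w}(b)$, so the product $\Tcal^{\mathrm{int}}_h\,\Tcal^{\mathrm{ext}}_h$ represents the exact transfer matrix of \eqref{eq:eigeq} from $b$ to $0$, up to errors negligible against any term retained, since both factors are valid asymptotic expansions on the overlap. Each term of the product, evaluated at $x_h=h^{1-\eps}$, is a power of $h$ with an $\eps$-dependent exponent times an oscillatory factor $e^{i\ell\sqrt{E}x_h/h}e^{\pm i\sigma_E/h}$; but the exact transfer matrix is independent of $x_h$, so every genuinely $x_h$-dependent contribution must cancel order by order. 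Concretely, each net oscillation $e^{ik\sqrt{E}x_h/h}$ with $k\neq 0$ disappears, and among the non-oscillatory terms only those whose $h$-exponent does not depend on $\eps$ remain. These survivors are exactly the products of the ``bulk'' interior factors (powers $h^{m\gamma+k}$, coming from the convergent parts of the oscillatory integrals over $[0,x_h]$, which behave like $(h/\sqrt{E})^{\gamma+k+1}\Gamma(\gamma+k+1)$) with the $\eps$-independent parts $h^{n}$ of the WKB correction integrals, and they combine into the exponent set $\{m\gamma+n,\ m\geq 0,\ n\geq 0\}$. The piece $(m,n)=(0,0)$ is the composed leading rotation $R\bigl(\sqrt{E}x_h/h\bigr)R\bigl(h^{-1}\!\int_{x_h}^{b}\sqrt{E-V}\bigr)=R\bigl(\sigma_E/h+\delta_h\bigr)$ with $\delta_h=h^{-1}\!\int_0^{x_h}(\sqrt{E}-\sqrt{E-V})$; the $\eps$-dependent phase $\delta_h$ is precisely the shift produced by the first interior Volterra correction, which turns $R\bigl(\sqrt{E}x_h/h\bigr)$ into $R\bigl(h^{-1}\!\int_0^{x_h}\sqrt{E-V}\bigr)$, so the leading term reorganizes into $R(\sigma_E/h)=D_h$.

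Finally, the $\cos/\sin$ form comes for free: each entry of $\Tcal^{\mathrm{ext}}_h$ is an affine combination of $e^{\pm i h^{-1}\!\int_{x_h}^{b}\sqrt{E-V}}=e^{\mp i\sqrt{E}x_h/h}e^{\pm i\sigma_E/h}e^{\pm i\delta_h}$ (the WKB solutions being proportional to $e^{\pm i h^{-1}\!\int_{x_h}^{x}\sqrt{E-V}}$), and each entry of $\Tcal^{\mathrm{int}}_h$ is a combination of $e^{i\ell\sqrt{E}x_h/h}$ times powers of $x_h$, so after the $x_h$-cancellation only the coefficients of $e^{\pm i\sigma_E/h}$ remain; taking real and imaginary parts writes them as $\cos(\sigma_E/h)\,\Abb^{+}_h(E)+\sin(\sigma_E/h)\,\Abb^{-}_h(E)$ added to $D_h$, with $\Abb^{\pm}_h(E)$ inheriting the expansion with exponent set $\{m\gamma+n\}\setminus\{0\}$, and tracking the dependence of each coefficient on the Taylor coefficients of $W$ at $0$ and on the norms of $W$ on $[0,b]$ yields the claimed uniformity. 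The $2\times 2$ linear algebra is routine; the real work, and the main obstacle, is the bookkeeping just described — showing that the interior and WKB expansions genuinely overlap to all orders on $[h^{1-\eps_0},h^{1-\eps_1}]$, and that the only $x_h$-independent survivors of the product carry exponents in $\{m\gamma+n\}$.
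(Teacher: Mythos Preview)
Your proposal is correct and follows essentially the same strategy as the paper: build WKB solutions on $[h^{1-\eps},b]$ and Volterra-type solutions on $[0,h^{1-\eps_1}]$, match on the overlap at $x_h=h^{1-\eps}$, and then invoke the $\eps$-independence of the resulting transfer matrix to force the exponent set down to $\{m\gamma+n\}\setminus\{0\}$. The only cosmetic differences are that the paper packages the matching via the coefficients of $\phi$ in the bases $\psi^{\pm,\mathrm{int}}$, $\psi^{\pm,\mathrm{ext}}$ (identifying the prefactors of $e^{\pm i\sqrt{E}h^{-\eps}}$) rather than composing transfer matrices directly, and it absorbs your explicit $\delta_h$-cancellation of the leading rotation into the general exponent-set argument instead of singling it out.
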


\begin{rem}
    The case $\gamma \notin \mathbb{Q}$ is a technical requirement to keep the different exponent sets that appear from including the value $-1$. 
    If $\gamma \in \mathbb{N}$, the result follows from standard WKB analysis. If $\gamma \in \mathbb{Q}\setminus \mathbb{N}$, similar results should 
    hold including $x^{m\gamma +n}\log x$ terms in the representative expansion.    
\end{rem}

The study of Schr\"odinger operators is a standard, very classical problem and many properties of their eigenvalues
and eigenfunctions can be found in the literature on Sturm-Liouville problems and semiclassical analysis
(see Titchmarsch \cite{titchmarsh1946eigenfunction}, Olver \cite{olver}, H\"ormander \cite{Hor-v1,Hor-v2,Hormander3,Hormander4},
Maslov \cite{Maslov72}, 
Helffer-Robert \cite{helffer1983calcul}, Dimassi-Sj\"ostrand \cite{DiSj}, Zworski \cite{zworski2012semiclassical}). 

Related rules
for smooth potentials ($\gamma \geq 2$) in the semiclassical literature  for a sequence of
eigenvalues $(E_h)_{h>0}$ that converges to a {non-critical energy $E_0 > 0$} with a connected energy surface can be found in Section 10.5 in \cite{Bender_Orszag78} or \cite{HMR87, de2005bohr, Yafaev11}.

This is a follow-up result and related to the authors' previous results on eigenvalue spacings for Schrödinger operators with rough potentials \cite{hillairet2023eigenvalue}. There we considered $b = + \infty$ and analyzed the eigenvalue spacings resulting from boundary conditions at $x=0$ using very different techniques, such as the construction of semiclassical defect measures.  Notably, the spacings found in \cite{hillairet2023eigenvalue} depend upon the singular potential parameter $\gamma$ in a natural way, related to how the exponent set determines the behavior of the matrix $\Abb_h^{\pm}$. The Eigenvalue spacings for different boundary conditions can also be inferred from the Bohr-Sommerfeld rules through the matrix equation we establish in Theorem \ref{thm:matching}.

%Specifically, we previously proved the following result. 

%\begin{Theorem}[\cite{hillairet2023eigenvalue}]
%\label{thm:main}
%    Assume that $\gamma>0$ and $W$ is smooth and positive on $[0,+\infty)$. Let $V=x^\gamma W$ and $P_h$ the Dirichlet or Neumann realization of
%    $-h^2u''+V$ on $[0,+\infty)$. If $\liminf_{x\rightarrow +\infty} V(x) >0$, there exist $M, h_0, \bfc>0$ such that
%    \begin{enumerate}
%    \item For all $h\leq h_0$, $\spec P_h \cap [0,M]$ is purely discrete,
%    \item For any $h\leq h_0$ and any $E$ in $\spec P_h \cap [0,M]$,
%      \[
%d_h(E) \,\geq \, \bfc h\cdot E^{\frac{\gamma-2}{2\gamma}},
%\]
%where for $E$ in $\spec P_h$, 
%\[
%  d_h(E) \defeq \inf \{ |E-\tilde{E}|,~ \tilde{E}\in \spec P_h, \tilde{E} \neq E\}.
%\]
%\end{enumerate}
%\end{Theorem}

As seen in \cite{hillairet2012nonconcentration}, the potentials we consider here arise from the adiabatic ansatz in a stadium-like
billiard. In addition, semiclassical Schr\"odinger operators of this sort appear in the study of waveguides with corners \cite{rouvinez1995scattering, DaugeRaymond2012waveguides}, of flat triangles \cite{OurmieresBonafos2015triangles, HillairetJudge2011triangles}, and of diffractive trapping for conormal potentials \cite{GW}.  Singular potentials have also been studied in for instance \cite{LaiRobert79, Berry82, Chr_AIF_15,filippas2023whispering}.  See also \cite{friedlander2009spectrum} and \cite{Simon_lowlying_83} for a much more complete study of the bottom of the well for quadratic potentials ($\gamma = 2$), or \cite{Bony_Popoff19} for even more degenerate situations. The study of semi-excited states in Sj\"ostrand \cite{Sjostrand_semi92} is also related.

The paper will proceed as follows.  In Section \ref{sec:setting}, we clearly define the problem and lay out the notation necessary to proceed.  Then, in Section \ref{sec:WKBext}, we describe a WKB expansion that is valid up to an $h$ dependent neighborhood of $0$, i.e. the {\it exterior region}.  Then, we control the eigenfunctions in a small $h$-dependent neighborhood of $0$ in Section \ref{sec:int}, i.e. the {\it interior region}.  In Section \ref{sec:matching}, we prove Theorem \ref{thm:matching} by gluing solutions together on interface of the exterior and interior regions.  Lastly, in Section \ref{sec:examples}, we apply Theorem \ref{thm:matching} to two key settings of computing singular Bohr-Sommerfeld rules.

\subsection*{Acknowledgments} This work initiated when the second author
visited the first for an extended stay as a professeur invit\'e at the Universit\'e d'Orl\'eans and also benefited from the invitation of the first author
to the UNC at Chapel Hill. The authors thank both institutions. J.L.M. acknowledges supports from the NSF through NSF grant DMS-2307384, and L.H. acknowledges the support of the ANR program ADYCT (grant ANR-20-CE40-0017). We warmly thank the anonymous referee who carefully read the first version of the paper.

\section{Setting}
\label{sec:setting}
We consider the semiclassical Schrödinger equation
\begin{equation}\label{eq:Sch}
  -h^2u'' \,+\,V(x)u \,=\,E_h u.
\end{equation}
on the interval $[0,b]$.

Let $(u_h)_{h\leq 1}$ be a family of solutions to \eqref{eq:Sch}, and $(x_h)_{h\leq 1}$ be such that
$x_h$ decreases to $0$ for $h$ going to $0$.
       We define the semiclassical Cauchy datum at $x_h$ by
       \[
         C_h\,=\,
         \begin{pmatrix}
           u_h(x_h) \\
           hu'_h(x_h)\\
         \end{pmatrix}.
       \]

For a compact subinterval $I \subset [0,b]$, we equip the space $C^\infty (I)$ with its classical
Fréchet topology associated with the family of norms $(p_N)_{N\geq I}$ defined by 
\[
  \forall u\in C^{\infty}(I),~~p_N(u)\,=\,\max \left \{ \sup \{ |u^{(k)}(x)|,~x\in I \},~0\leq k \leq N \right \}.
\]
Observe that the notation does not reflect the dependence on $I$ that will be clear from the context.

\begin{assu}
\label{assu:bot}
We make the following assumptions on $V$:
\begin{itemize}
\item The potential $V$ is smooth and increasing on $(0,b]$ and continuous on $[0,b]$.
\item $V(0)=0$ and there exist $\gamma >0$ and $W>0$ smooth on $[0,b]$ such
  that $\forall x>0,~ V(x) = x^\gamma W(x).$
\end{itemize}
\end{assu}

We fix $\Kcal$ a compact set in $C^\infty([0,b] ; \R)$, and denote by $\Vcal$ the set of potentials $V$ that
satisfy the preceding assumptions with $W\in \Kcal$. We fix $K$ a compact interval in $(0,+\infty)$ and
we assume that

\begin{equation}\label{eq:defdelta}
  \left \{
    \begin{array}{c}
      \exists \delta >0,~~\forall (V,E) \in \Vcal\times K , \\
      \forall x\in [0,b],~~E-V(x)\,\geq \delta.\\
    \end{array}
  \right.
\end{equation}

This assumption says that for any energy $E \in K$, the interval $[0,b]$ is in the classically allowed region.
We can thus perform a WKB approximation for $u_h$. It is expected that, when the potential is not smooth at $x=0$,
the WKB method will give a good approximation of the solution $u_h$ only for $x \geq a_h>0$. In the following section, we provide
the necessary estimates to give quantitative statements about $a_h$ and the corresponding asymptotic expansions for $C_h$.

\section{WKB Ansatz in the exterior region}
\label{sec:WKBext}

The WKB method (see for instance Dyatlov-Zworski \cite{dyatlov2019mathematical}, Zworski \cite{zworski2012semiclassical} and many others) gives
asymptotic expansions for any solution to the second order ODE
\[
  h^2 u_h'' \,+\, q(x) u_h \,=\,0, 
\]
on some interval $I_h\deq [a_h,b_h]$, where $q$ is a smooth potential that is positive on $I_h$.
In our setting, we have $q=E-V$.

The strategy consists in first constructing two independent ($O(h^\infty)$) quasimodes $u_h^{\pm}$ and then proving that
any true solution is $O(h^\infty)$ close to a linear combination of the $u_h^{\pm}$.
It is usually performed with a smooth potential $q$ on a fixed interval $I$ (i.e. $I_h$ independent on $h$).
In that case, both $O(h^\infty)$ remainder terms can be estimated using the sup-norm over $I$ of $q^{-1}$ and of the derivatives of
$q$. A rather crude estimate (or even knowing that such an estimate exists) is enough to ensure that the method
works.

With our approach, it will be crucial to let $a_h$ go to zero so that, when $\gamma$ is not an integer, the sup-norm
on $I_h$ of high-order derivatives of $q$ will blow-up. As a result, the sequence $a_h$ cannot decrease too fast to zero but
it is crucial to our method that $a_h$ does not decrease too slow either. Indeed, the main task here is to determine
the greatest $\alpha$ such that the WKB expansion holds on $[h^\alpha,b]$.

For the convenience of the reader, we have found it clearer to present the basics of the WKB method so as to see
what estimate is needed.  The WKB Ansatz consists in writing $u_h$ under the following form :
\[
  u_h (x)\sim \exp(\frac{i}{h}S(x))\sum_{k\geq 0} h^k A_k(x).
\]
Plugging into the equation and putting together the terms with the same power of $h$, we obtain the following
set of equations.
\begin{itemize}
\item The eikonal equation:
  \[
\forall x\in I_h,~S'(x)^2 \,=\, q(x).
  \]
\item The homogeneous transport equation:
  \[
    \forall x\in I_h,~ 2S'(x)A_0'(x)\,+\,S''(x)A_0(x)\,=\,0.
  \]
\item The inhomogeneous transport equations:
  \[
    \forall k\geq 0,~2S'A_{k+1}'\,+\,S''A_{k+1}\,=\,-iA_k''.
  \]
\end{itemize}

If this system can be solved, then for any solution $(S,(A_k)_{k\geq 0})$ and any $N$, we can define
$u_{h,N}^+\,=\,\exp(\frac{i}{h}S)\sum_{k=0}^N h^k A_k$ and this function then satisfies
\[
  h^2{u_{h,N}^+}'' \,+\, q \cdot u_{h,N}^+\,=\, h^{N+2}A_{N}''\exp(\frac{i}{h}S).
\]

The eikonal equation can be solved because $q$ is positive. The homogeneous and inhomogenous transport equations are linear first order ODE
that thus can also be solved. We choose the following solution
\begin{gather}
\nonumber  \forall x \in I_h, \ \ S'(x) \,=\, \sqrt{q(x)}, \ \ S(x)\,=\,-\int_x^b S'(y)\,dy,\\
\nonumber  A_0(x)\,=\, [S'(x)]^{-\frac{1}{2}}\,=\, [q(x)]^{-\frac{1}{4}},\\
\label{def:WKB} \forall k\geq 0,~A_{k+1}(x)\,=\,\frac{i}{2}A_0(x)\cdot \int_x^{b}A_k''(y)(A_0(y))^{-1}\, dy.
\end{gather}  
With this choice, we define $u^+_{h,N}$ as above and set $u_{h,N}^- \deq\, \overline{u_{h,N}^+}.$

We now proceed to make estimates in the case that $q(x)=E-V(x)$ and the interval $I_h \subset [a_h,b]$. 
When the potential $V$ is smooth on $[0,b]$ the functions $A_k$ determined by the preceding formulas are also smooth
on $[0,b]$. It is no longer the case when $\gamma \notin \N$ for which we need a convenient setting of
asymptotic expansions at $0$.

\subsection{Generalized Taylor expansions}
The idea behind dealing with singular potentials is to keep track separately of the singular behavior
when $x$ goes to zero. When differentiating again and again, new singular powers of $x$ appear.
In order to be able to follow each of them, it is convenient to set the following definition.

\begin{defn}
  Let $u$ be a continuous function on some interval $(0,b]$. We will say that $u$ admits a generalized Taylor expansion at $0$ if
  there exists a discrete set $\Acal \subset \R$ that is bounded from below, and a collection of complex numbers $(a_\alpha)_{\alpha \in \Acal}$
  such that
  \[
    \forall N\geq 0,~\exists C_N,~ \forall x\in (0,b],~\left |u(x)-\sum_{\alpha\in \Acal, \alpha <N}a_\alpha x^\alpha \right |\,\leq \, C_N x^N.
  \]
  We will say that $\Acal(u)$ (or simply $\Acal$ if there is no ambiguity) is the exponent set of $u$.
\end{defn}

\begin{rem}
    If we wish to deal with $\gamma \in \mathbb{Q}$, then we would need to use generalized Taylor expansion with respect to a {\it scale} of functions that also include the functions $x\mapsto x^\alpha \log x$.
\end{rem}

Observe that there is a small ambiguity in the set $\Acal$. Indeed, we can artificially add exponents and say that the
corresponding coefficient vanishes.
However, the set of $\alpha$ for which $a_\alpha \neq 0$ is determined by $u$. Indeed, either $u = O(x^\infty)$ or we have
\[
  \min \{ \alpha\in \Acal, a_\alpha\neq 0\}\,=\, \sup \{ \alpha \in \R,~ \lim_{x\rightarrow 0} x^{-\alpha}u(x) \,=\,0\}.
\]
Once $\alpha_0$ is determined, we get $a_0\,=\, \lim\limits_{x\rightarrow 0} x^{-\alpha_0}u(x)$.
Inductively, we then obtain a sequence $\alpha_n$ and the corresponding $a_n$.
This argument shows that if we know that $u$ has some generalized Taylor expansion then we can find some exponent set that is associated
to it. Alternatively, if we know \textit{a priori} the exponent set then, as for regular Taylor expansions,
each coefficient is determined by $u$ (including those that vanish). 

We record here a few facts that generalize the corresponding statements for usual Taylor expansions.
The proofs are left to the reader.
\begin{itemize}
\item If $u$ and $v$ admit generalized Taylor expansions then so does any linear combination of $u$ and $v$ and  
\[
\forall \lambda, \mu \in \C,~ \Acal(\lambda u+ \mu v)\,=\,\Acal(u)\cup \Acal(v).
\]
\item If $u$ and $v$ admit generalized Taylor expansions, then so does the product $uv$ and
  \[
\Acal(uv)\,=\big\{ \alpha+\beta,~\alpha \in \Acal(u),\, \beta \in \Acal(v) \big \}.
\]
\item If $u$ admits a generalized Taylor expansion, and $-1\notin \Acal(u)$, then the function $U$ defined on $(0,b]$
  by $U(x)\,=\, \int_x^b u(y) \, dy$ admits a generalized Taylor expansion and
  \[
    \Acal(U)\,=\big \{ \alpha +1,~ \alpha \in \Acal(u)\big \} \cup \{0\}.
  \]
  This property implies that if $u$ and $u'$ admit generalized Taylor expansion then
  \[
    \Acal(u')\,=\, \big \{ \alpha -1,~ \alpha \in \Acal(u)\setminus \{0\} \big \}.
  \]
  In particular, $-1$ is never in $\Acal(u')$.
\item If $u$ admits a generalized Taylor expansion at $0$ and $\alpha_0 = \min \big \{ \alpha \in \Acal(u)\setminus \{0\}, a_\alpha \neq 0 \big \}$.
  If $\alpha_0 \geq 0$, $0 \in \Acal(u)$, and $a_0\neq 0$ then there exists a function $v$ that is continuous on $[0,b]$ such that
  \[
    \forall x\in (0, b],~ u(x)\,=\, (1+x^{\alpha_0})v(x).
  \]
  In any other cases, there exists a function $v$ that is continuous on $[0,b]$ such that
   \[
    \forall x\in (0, b],~ u(x)\,=\, x^{\alpha_0}v(x).
  \]
\end{itemize}

We observe that such a definition is quite common in asymptotic analysis.
It is also closely linked with asymptotic expansions in symbol classes for conormal potentials
(see \cite{GW} Gannot-Wunsch).

\begin{ex}\label{ex:gte}
  For any $(W,E)\in \Kcal\times K$ and any $\alpha \in \R$, the function $q_\alpha$ defined on $(0,b]$
  by $q_\alpha(x)\,=\,(E-x^\gamma W(x))^\alpha$ admits a generalized asymptotic expansion with exponent set
  $\{ m\gamma +n, m\geq 1, n\geq 0 \} \cup \{0\}.$
  Indeed, using the fact that $(E-x^\gamma W)$ never vanishes, we can make a Taylor expansion
  \[
    q_\alpha(x)\,\sim\, E\,+\,\sum_{m\geq 1} c_{\alpha,m} x^{m\gamma} (W(x))^m. 
  \]
  The claim follows by making a Taylor expansion of $W^m$.
\end{ex}

The rest of this section is devoted to prove the following proposition.

\begin{prop}\label{prop:estAk}
   In the setting described above, let $(A_k)_{k\geq 0}$ be the sequence of functions defined on $(0,b]$ by \eqref{def:WKB}.
  \begin{enumerate}
  \item For any $k,\ell\ \geq 0$, $A_k^{(\ell)}$ admits a generalized Taylor expansion at $0$.
    Defining $\Acal_{k,\ell}$ the exponent set of $A_k^{(\ell)}$, we have
      \begin{gather*}
       \Acal_{k,0} \deq \left \{ m\gamma + n-k,~ m\geq 1,~ n\geq 0 \right \}\cup \{0\},\\
      \forall \ell \geq 1,~ \Acal_{k,\ell} \deq \left \{ m\gamma + n-k-\ell,~ m\geq 1,~ n\geq 0 \right \}.
    \end{gather*}
  \item For any $k\,\geq 0$, there exists $C_k\in\R,\,N_k,\,\sigma_k \in\N$ such that, for all $W\in \Kcal$ and $E\in K$,
    \[
    \forall x\in (0,b],~ |A_k(x)|\,\leq C_k(1+x^{\gamma-k})(1+p_{N_k}(W))^{\sigma_k} . 
    \]
  \item For any $k\geq 0$, for any $\ell \geq 1$, there exists $C_{k,\ell}\in\R,\,N_{k,\ell},\,\sigma_{k,\ell} \in\N$ such that, for all
    $W\in \Kcal$ and $E\in K$,
    \[
    \forall x\in (0,b],~ |A_k^{(\ell)}(x)|\,\leq C_{k,\ell}\cdot x^{\gamma-k-\ell}\cdot (1+p_{N_{k,\ell}}(W))^{\sigma_{k,\ell}} . 
    \]
    
\end{enumerate}
\end{prop}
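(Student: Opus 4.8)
The plan is to prove all three assertions of Proposition~\ref{prop:estAk} simultaneously by induction on $k$, keeping careful track both of the exponent sets and of the explicit dependence on $W$. The base case $k=0$ is immediate: by Example~\ref{ex:gte}, $A_0=q^{-1/4}=(E-x^\gamma W(x))^{-1/4}$ admits a generalized Taylor expansion with exponent set $\{m\gamma+n,\ m\ge 1,\ n\ge 0\}\cup\{0\}$, which is exactly $\Acal_{0,0}$. Differentiating $\ell$ times, the closure properties of generalized Taylor expansions recorded in the bulleted list (each derivative removes $0$ from the exponent set and shifts the rest down by $1$) give $\Acal_{0,\ell}=\{m\gamma+n-\ell,\ m\ge 1,\ n\ge 0\}$ for $\ell\ge 1$. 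The bounds in (2) and (3) for $k=0$ follow by writing $A_0 = (E-x^\gamma W)^{-1/4}$, using the uniform lower bound \eqref{eq:defdelta} $E-V(x)\ge\delta$ and the Fa\`a di Bruno formula: each derivative of $(E-x^\gamma W)^{-1/4}$ is a finite sum of terms of the form $(E-x^\gamma W)^{-1/4-j}$ times products of derivatives of $x^\gamma W$, and the lowest power of $x$ appearing, counting the $x^\gamma$ factors against the derivatives taken, is $\gamma-\ell$ (this is exactly the content of the exponent set computation), while the $W$-dependence is polynomial in $p_{N}(W)$ with $N,\sigma$ depending on $\ell$ and $\gamma$.

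For the inductive step, assume the statement holds for $A_0,\dots,A_k$ and consider $A_{k+1}(x)=\frac{i}{2}A_0(x)\int_x^b A_k''(y)\,A_0(y)^{-1}\,dy$. For the exponent set: $A_k''$ has exponent set $\Acal_{k,2}=\{m\gamma+n-k-2,\ m\ge1,\ n\ge0\}$ by the inductive hypothesis, and $A_0^{-1}=q^{1/4}$ has exponent set $\{m\gamma+n,\ m\ge1,\ n\ge0\}\cup\{0\}$ by Example~\ref{ex:gte} again; the product rule for exponent sets gives that $A_k'' A_0^{-1}$ has exponent set $\{m\gamma+n-k-2,\ m\ge1,\ n\ge0\}$ (the minimum exponent being $\gamma-k-2$, and one must check $-1$ does not occur, i.e.\ that $\gamma-k-2\ne-1$ in the relevant range — more precisely that no exponent equals $-1$; when $\gamma\notin\N$ this holds generically and the point is that even if $-1$ does appear one treats that term separately, or one simply integrates term by term noting the integrated exponents are $\{m\gamma+n-k-1,\ m\ge1,\ n\ge0\}\cup\{0\}$). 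Then the integration rule gives $\Acal(\int_x^b A_k'' A_0^{-1})=\{m\gamma+n-k-1,\ m\ge1,\ n\ge0\}\cup\{0\}$, and multiplying by $A_0$ (exponent set $\{m\gamma+n,\ m\ge1,\ n\ge0\}\cup\{0\}$) yields $\Acal_{k+1,0}=\{m\gamma+n-k-1,\ m\ge1,\ n\ge0\}\cup\{0\}$, matching the claim for index $k+1$. Differentiating then gives $\Acal_{k+1,\ell}$ as asserted via the derivative rule.

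For the quantitative bounds (2) and (3) at step $k+1$: by the inductive bound (3), $|A_k''(y)|\le C_{k,2}\, y^{\gamma-k-2}(1+p_{N_{k,2}}(W))^{\sigma_{k,2}}$, and $|A_0(y)^{-1}|=|E-y^\gamma W|^{1/4}$ is bounded by a constant depending only on $K$ and $p_0(W)$; hence $|A_k''(y)A_0(y)^{-1}|\le C\, y^{\gamma-k-2}(1+p_N(W))^\sigma$. Here the main subtlety arises: if $\gamma-k-2>-1$, i.e.\ $k<\gamma-1$, the integrand is integrable at $0$ and $|\int_x^b A_k'' A_0^{-1}|$ is bounded by a constant (uniformly in $x$), so $|A_{k+1}(x)|\le C|A_0(x)|\le C(1+x^{\gamma-k-1})\cdots$; but if $\gamma-k-2\le-1$ the integral $\int_x^b y^{\gamma-k-2}\,dy$ is of order $x^{\gamma-k-1}$ (or $|\log x|$ in the borderline integer case, which one handles by noting that near integer values one can absorb logarithms, or more honestly by observing that the exponent $\gamma-k-1$ is what the exponent-set bullet points predict and the bound $(1+x^{\gamma-k-1})$ still holds since $\gamma-k-1<0$ makes $x^{\gamma-k-1}$ the dominant term as $x\to0$). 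Multiplying by $|A_0(x)|\le C(1+x^{\gamma-k-1})$... — wait, one must be careful: $|A_0|\le C(1+x^{\gamma})$ only when interpreted via (2) with $k=0$, which reads $|A_0(x)|\le C_0(1+x^{\gamma})(\cdots)$, hence bounded; so $|A_{k+1}(x)|\le C(1+x^{\gamma-k-1})(1+p_N(W))^\sigma$, giving (2) for $k+1$. For (3), differentiate $A_{k+1}=\frac i2 A_0\int_x^b A_k'' A_0^{-1}$ by Leibniz: $A_{k+1}^{(\ell)}$ is a sum of $A_0^{(j)}\cdot\big(\int_x^b A_k'' A_0^{-1}\big)^{(\ell-j)}$. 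For $j<\ell$ the factor $\big(\int_x^b A_k'' A_0^{-1}\big)^{(\ell-j-1)}$-derivative of $A_k'' A_0^{-1}$ is estimated using the inductive/product bounds (here one needs a bound on the derivatives of $A_0^{-1}=q^{1/4}$, again from Fa\`a di Bruno plus \eqref{eq:defdelta}), and for $j=\ell$ one uses $|A_0^{(\ell)}(x)|\le C x^{\gamma-\ell}(\cdots)$ from (3) at $k=0$ times the integral bound $\le C(1+x^{\gamma-k-1})$. Collecting the smallest power of $x$ in all terms gives $x^{\gamma-(k+1)-\ell}$, matching (3) for index $k+1$. The main obstacle is the bookkeeping at the borderline cases where $\gamma-k-2$ is a negative integer (logarithmic terms in the integral) and, more importantly, ensuring that at every step the constants $C,N,\sigma$ depend only on $\gamma$, $k$, $\ell$, $K$ and finitely many seminorms $p_N(W)$ — uniformly over $W\in\Kcal$ — which requires that the constants in the generalized-Taylor-expansion remainder estimates for $q^{\pm1/4}$ are themselves controlled by seminorms of $W$, a point one should verify once and for all when proving Example~\ref{ex:gte} quantitatively.
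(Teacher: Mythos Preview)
Your approach is essentially the paper's: induction on $k$, with the base case handled via the structure of $q^{\pm 1/4}$ and the inductive step via Leibniz applied to $A_{k+1}=\frac{i}{2}A_0\int_x^b A_k'' A_0^{-1}$. Two organizational differences are worth noting. First, the paper packages your Fa\`a di Bruno argument for $A_0^{(\ell)}$ and $(A_0^{-1})^{(\ell)}$ into a standalone lemma (Lemma~\ref{lem:init}) that writes $q_\alpha^{(\ell)}$ explicitly as $\sum_{j,m,n}(E-x^\gamma W)^{\alpha-j}x^{m\gamma-n}W_{\alpha,\ell,j,m,n}$ with controlled seminorms $p_N(W_{\alpha,\ell,\bullet})$; this makes the uniform $W$-dependence you flag at the end completely explicit. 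Second, for the quantitative induction the paper introduces continuous functions $B_{k,\ell}$ on $[0,b]$ with $A_k=(1+x^{\gamma-k})B_{k,0}$ and $A_k^{(\ell)}=x^{\gamma-k-\ell}B_{k,\ell}$ for $\ell\ge1$, and then proves a recursion on $b_{k,\ell}=\|B_{k,\ell}\|_\infty$; this replaces your case-by-case discussion of whether $\gamma-k-2>-1$ with a single uniform estimate, since $x\mapsto(1+x^{\gamma-k-1})^{-1}\int_x^b y^{\gamma-k-2}\,dy$ is bounded on $(0,b]$ regardless of the sign of $\gamma-k-1$.

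Your worry about $-1$ lying in the exponent set of $A_k''A_0^{-1}$ (and attendant logarithms) is not an obstacle here: that exponent set is $\{m\gamma+n-k-2:m\ge1,\ n\ge0\}$, and $-1$ belongs to it only if $m\gamma\in\N$ for some $m\ge1$. For $\gamma\notin\mathbb{Q}$ this never happens; for $\gamma\in\N$ the potential is smooth and all $A_k$ are smooth on $[0,b]$ so the issue is moot. The paper simply asserts ``$-1$ is not in the exponent set'' and ``$\gamma-k$ is never $0$,'' which is the same implicit genericity assumption you are circling around; you can drop the hedging about logarithmic terms.
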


The proof of this proposition will be by induction and we begin by studying the case $k=0$.

\subsection{A preliminary estimate}
The following lemma will allow us to control the derivatives of $A_0$.

\begin{Lemma}\label{lem:init}
  Let $W_0$ be a smooth function defined in a neighborhood of $0$ and $\gamma \in \R \setminus \mathbb{Q}$. Let $E$ and $b$ be such that the function
  $x\mapsto E-x_+^\gamma W_0(x)$ is positive on $[0,b]$. For $\alpha\in \R$, define $q_\alpha$ on $[0,b]$ by
  $q_\alpha(x)\,=\, \left[ E-x_+^\gamma W_0(x) \right ]^{\alpha}$.
  Then there exist functions $W_{\alpha,\ell, j,m,n}$ that are smooth on $[0,b]$ such that, for all $\ell \geq 1$ and all $x\in (0,b]$ we can write
  \[
    q_{\alpha}^{(\ell)}(x)\,=\,\sum_{j=1}^\ell \left[E-x_+^\gamma W_0(x)\right ]^{\alpha-j}\sum_{m=1}^{\ell}\sum_{n=0}^{\ell} x^{m\gamma -n}W_{\alpha,\ell,j,m,n}(x).
  \]
If we define  
  \[
p_N(W_{\alpha,\ell,\bullet})\,=\, \max \{ p_N(W_{\alpha, \ell, j,m,n}),~ 1\leq j,m\leq \ell,~0\leq n\leq \ell \},
\]
then for any $\ell$ and any $N$ there exists a constant $C\deq C(\ell,N)$ such that 
  \[
p_N(W_{\alpha,\ell,\bullet}) \leq C (1+p_{N+1}(W_0))(1+p_{N+2}(W_0))\cdots (1+p_{N+\ell}(W_0))p_{N+\ell}(W_0).
\] 
Moreover, if $\gamma$ is an integer, then $W_{\alpha,\ell,j,m,n}$ vanishes as soon as $m\gamma-n<0$. 
\end{Lemma}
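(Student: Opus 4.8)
The plan is to prove the formula for $q_\alpha^{(\ell)}$ by induction on $\ell$, since the structure of the claimed expression is stable under one more differentiation. Write $Q(x) = E - x_+^\gamma W_0(x)$, so $q_\alpha = Q^\alpha$ and $Q$ is smooth and bounded away from $0$ on $[0,b]$. For the base case $\ell = 1$ one computes $q_\alpha' = \alpha Q^{\alpha-1} Q'$, and $Q'(x) = -\gamma x^{\gamma-1} W_0(x) - x^\gamma W_0'(x) = x^{\gamma-1}\bigl(-\gamma W_0(x) - x W_0'(x)\bigr)$, which is exactly of the form $Q^{\alpha-1}\cdot x^{\gamma-1} W_{\alpha,1,1,1,1}(x)$ with $W_{\alpha,1,1,1,1}(x) = \alpha(-\gamma W_0(x) - x W_0'(x))$ smooth on $[0,b]$. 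For the inductive step, differentiate the expression for $q_\alpha^{(\ell)}$ term by term: each summand $Q^{\alpha-j} x^{m\gamma-n} W_{\alpha,\ell,j,m,n}$ produces, via the product rule, three contributions — differentiating $Q^{\alpha-j}$ gives $(\alpha-j) Q^{\alpha-j-1} Q' = Q^{\alpha-(j+1)} x^{\gamma-1}\cdot (\text{smooth})$, which shifts $j \to j+1$ and $m\to m+1$, $n \to n+1$; differentiating $x^{m\gamma-n}$ gives $(m\gamma-n) x^{m\gamma-(n+1)} W_{\alpha,\ell,j,m,n}$, which shifts $n \to n+1$; and differentiating $W_{\alpha,\ell,j,m,n}$ keeps the exponents and replaces the smooth function by its derivative. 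One checks the index ranges are respected: $j$ runs up to $\ell+1$, $m$ up to $\ell+1$, and $n$ up to $\ell+1$, matching the claim with $\ell$ replaced by $\ell+1$. Collecting the smooth functions multiplying each monomial $Q^{\alpha-j} x^{m\gamma-n}$ defines the new coefficients $W_{\alpha,\ell+1,j,m,n}$.

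Next I would establish the seminorm bound, again by induction on $\ell$, tracking the recursion just described. Each new coefficient $W_{\alpha,\ell+1,j,m,n}$ is a finite linear combination (with bounded integer/real coefficients depending only on $\ell,\alpha$ and the indices, which enter polynomially through factors like $m\gamma-n$ and $\alpha-j$) of terms of the form $W_{\alpha,\ell,j',m',n'}$, $x W_{\alpha,\ell,j',m',n'}$, $x W_{\alpha,\ell,j',m',n'}'$ (from differentiating $Q$ and picking up the $x W_0'$ piece), $W_{\alpha,\ell,j',m',n'}'$, and products of $W_{\alpha,\ell,j',m',n'}$ or its derivative with $W_0$ or $W_0'$. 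The Leibniz rule and the elementary inequality $p_N(fg) \leq C_N\, p_N(f)\, p_N(g)$ on the fixed interval $[0,b]$ then give
\[
p_N(W_{\alpha,\ell+1,\bullet}) \leq C(\ell,N)\bigl(1 + p_{N+1}(W_0)\bigr)\, p_{N+1}\bigl(W_{\alpha,\ell,\bullet}\bigr),
\]
and feeding in the inductive hypothesis $p_{N+1}(W_{\alpha,\ell,\bullet}) \leq C(\ell, N+1)\prod_{i=1}^{\ell-1}(1+p_{N+1+i}(W_0))\cdot p_{N+1+\ell}(W_0)$ — with the convention that for $\ell=1$ the product is empty and $p_{N+1}(W_{\alpha,1,\bullet}) \leq C\, p_{N+1}(W_0)$ — telescopes precisely into the asserted product $(1+p_{N+1}(W_0))\cdots(1+p_{N+\ell}(W_0))\, p_{N+\ell+1}(W_0)$ for level $\ell+1$. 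The base case $\ell = 1$ is immediate from the explicit formula for $W_{\alpha,1,1,1,1}$.

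Finally, the integrality statement: if $\gamma \in \N$, I claim inductively that in the expression for $q_\alpha^{(\ell)}$ every coefficient $W_{\alpha,\ell,j,m,n}$ with $m\gamma - n < 0$ vanishes. This holds at $\ell=1$ since the only term has $m\gamma - n = \gamma - 1 \geq 0$. For the inductive step, observe that each of the three differentiation mechanisms above changes $m\gamma-n$ by at most... precisely: the $x^{m\gamma-n}$-differentiation sends $m\gamma - n \to m\gamma - n - 1$, the $Q$-differentiation sends $m\gamma - n \to m\gamma - n + \gamma - 1 \geq m\gamma - n$ (since $\gamma \geq 1$), and the $W$-differentiation leaves it fixed; moreover the $x^{m\gamma-n}$-differentiation carries the prefactor $m\gamma - n$, which is zero exactly when $m\gamma - n = 0$. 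So a term at level $\ell+1$ with exponent value $m\gamma - n < 0$ can only arise from a level-$\ell$ term whose exponent value was $\leq 0$; if it was $< 0$ the coefficient already vanished by hypothesis, and if it was $= 0$ the $x$-differentiation prefactor kills it and the $Q$-differentiation does not decrease the exponent, so no strictly negative term is produced. I expect the main obstacle to be purely bookkeeping: being careful that the three-way splitting of each term lands in the correct index ranges $1 \leq j,m \leq \ell+1$, $0\leq n\leq \ell+1$, and that the bounded-coefficient combinations are organized so the seminorm recursion telescopes exactly — there is no analytic difficulty, only the need to keep the combinatorics of the indices straight, in particular verifying that the $j\to j+1$ and simultaneous $m\to m+1$, $n\to n+1$ shift from differentiating $Q^{\alpha-j}$ never pushes an index out of range.
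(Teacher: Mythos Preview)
Your proof is correct and follows essentially the same approach as the paper: induction on $\ell$, differentiating term by term via the product rule to obtain a recursion for the smooth coefficients, and then deducing the seminorm inequality $p_N(W_{\alpha,\ell+1,\bullet}) \leq C(1+p_{N+1}(W_0))\,p_{N+1}(W_{\alpha,\ell,\bullet})$ which telescopes to the stated bound. The only cosmetic difference is that you absorb the two pieces of $Q' = -\gamma x^{\gamma-1}W_0 - x^\gamma W_0'$ into a single smooth factor times $x^{\gamma-1}$, whereas the paper keeps them separate (yielding a four-term rather than three-term recursion); and you spell out the integrality argument in full, while the paper leaves it implicit.
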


In the sequel, we will need this lemma only for $\alpha = \pm \frac{1}{4}$.

\begin{proof}
  In order to make the notations a bit lighter, we omit the dependence with respect to $\alpha$ below. Thus we set
  $W_{\ell,j,m,n}=W_{\alpha,\ell,j,m,n}$.
  
  The proof is by induction on $\ell$. The fact that $q_\alpha^{(\ell)}$ has the given expression is obtained by a straightforward derivation.
  Indeed, we find that 
  \begin{multline*}
    W_{\ell+1,j,m,n}\,=\, W'_{\ell,j,m,n}\,+\,(m\gamma-n+1)W_{\ell,j,m,n-1}\\
    +\,(\alpha-j+1)\left( W_0'\cdot W_{\ell,j-1,m-1,n}\,+\,\gamma\cdot W_0\cdot W_{\ell,j-1,m-1,n-1}\right)
  \end{multline*}
  with the convention that if $j,m,n$ is not in the range given for the sum defining $q_\alpha^{(\ell)}$ then the corresponding $W_{\ell,j,m,n}=0$. 
  Using the Leibniz derivation rule, the preceding expression also gives some $C$ (that depends only on $\ell$ and $N$) such that
  \[
    \begin{split}
      p_{N}(W_{\ell+1,\bullet})\,& \leq \,C\big( p_{N+1}(W_{\ell,\bullet})\,+\,p_{N+1}(W_0)p_{N}(W_{\ell,\bullet})\big)\\
      &\leq \,C\big( 1+p_{N+1}(W_0)\big)\cdot p_{N+1}(W_{\ell,\bullet}).
\end{split}
  \]
  The estimate follows by induction.
\end{proof}

\subsection{Proof of Proposition \ref{prop:estAk}}
\subsubsection{The case $k=0$}
Using the notations of Lemma \ref{lem:init}, we have
\[
  A_0=q_{-\frac{1}{4}}.  
\]

The fact that $A_0$ admits a generalized Taylor expansion follows from Example \ref{ex:gte} and the bound follows from the
fact that
\[
  \forall x\in [0,b],~~|A_0(x)|\,\leq\, \delta^{-\frac{1}{4}},
\]
where $\delta$ has been defined in \eqref{eq:defdelta}.

We now estimate the derivatives $A_0^{(\ell)}$, starting from the expression in Lemma \ref{lem:init}.
Since, for any $j$, $z\mapsto (E-z)^{\alpha-j}$ has a power series expansion in a neighborhood of $0$,
we can expand $(E-x^\gamma W(x))^{\alpha-j}\,=\,\sum_{m\geq 0} a_{j,n}x^{m\gamma}(W(x))^m$. Expanding $x\mapsto W(x)^m$
in Taylor series gives a generalized Taylor expansion for $x\mapsto (E-x^\gamma W(x))^{\alpha-j}$
whose exponent set is $\{ m\gamma+n,~m\geq 1, n\geq 0\}\cup \{0\}$. The claim then follows using the properties of
functions with generalized Taylor expansions.

We now prove the estimate for $A_0^{(\ell)}, \ell\geq 1$. Using properties of generalized Taylor expansions, there exists a function $B_{0,\ell}$
that is continuous on $[0,b]$ and such that  
\[
A_0^{(\ell)}(x)\,=\,x^{\gamma-\ell}B_{0,\ell}(x).
\]
We set $b_{0,\ell} \,=\, \|B_{0,\ell}\|_{\infty}$.

The expression in Lemma \ref{lem:init} implies that
\begin{equation*}
  \forall x\in (0,b],\, |A_0^{(\ell)}(x)| \, \leq p_0(W_{\ell,\bullet})\cdot 
  \sum_{j=1}^\ell \delta^{-\frac{1}{4}-j}\sum_{m=1}^\ell\sum_{n=0}^\ell x^{m\gamma-n}. 
\end{equation*}
We observe that the smallest power that appears is $\gamma-\ell$ so that we can factorize it. The remaining sum is then
bounded by $Cp_0(W_{\ell,\bullet})$. This gives the result we want for $b_{0,\ell}$, given the estimate on $p_0(W_{\ell,\bullet})$ provided by Lemma \ref{lem:init}.

\subsubsection{The induction step}
Let us now prove that if for any $\ell,~A_k^{(\ell)}$ has a generalized Taylor expansion, then so does $A_{k+1}^{(\ell')}$, for any $\ell'$.

Using the notations of Lemma \ref{lem:init}, we have $A_0^{-1}=q_{\frac{1}{4}}$ and the latter admits a generalized Taylor expansion
(see Example \ref{ex:gte}). So, using the induction hypothesis, $y\mapsto A_k''(y)A_0^{-1}(y)$ has a generalized Taylor expansion.
Since $\gamma \notin \mathbb{Q}$, $-1$ is not in the exponent set of the latter function, it follows that $x\mapsto \int_x^b A_k''(y)A_0^{-1}(y) \,dy$
also has a generalized Taylor expansion Moreover, the exponent set is seen to be
  \[
    \{ m\gamma+n-k-1,~m\geq 1, n\geq 0\} \cup \{ 0\}.
  \]
  
We now use Leibniz derivation rule and observe that $A_{k+1}^{(\ell')}(x)$ can be written as a linear combination of terms of the following form :
  \begin{gather*}
    A_0^{(\ell')}(x) \cdot \int_x^b   A_k''(y)A_0^{-1}(y) \,dy, \\
    A_0^{(\ell_1)}(x)A_k^{(2+\ell_2)}(x)(A_0^{-1})^{(\ell_3)}(x),~~\ell_1+\ell_2+\ell_3+1=\ell'.
  \end{gather*}
  All these terms have a generalized Taylor expansion using the induction hypothesis, Lemma \ref{lem:init} 
  the statement for $A_0^{(\ell)}$ and $(A_0^{-1})^{(\ell)}$ and the preceding argument
  for the integral. The exponent set is easily derived. We thus obtain the first claim of the proposition.

  We now move to prove the remaining estimate. Using the properties of functions with generalized Taylor expansion, we define continuous functions
  $(B_{k,\ell})_{k,\ell \geq 0}$ such that
  \begin{gather*}
    \forall k\geq 0,~\forall x\in (0,b], A_k(x)\,=\,(1+x^{\gamma-k})B_{k,0}(x),  \\
    \forall k\geq 0,\, \forall \ell\geq 1,~\forall x\in (0,b],~A_k^{(\ell)}(x)\,=\,x^{\gamma-k-\ell}B_{k,l}(x).
\end{gather*}
We will also denote by $b_{k,\ell}\deq \|B_{k,\ell}\|_{\infty}$. Observe that the statement of the proposition is equivalent to
proving that there exist constant $C,N,\sigma$, independent of $W\in \Kcal$ and $E\in K$ such that
\[
  b_{k,\ell}\,\leq \, C(1+p_N(W))^\sigma.
\]
This is again proved by induction on $k$.

From the definition of $A_{k+1}$, we derive
\[
  A_{k+1}(x) \leq\,\delta^{-\frac{1}{4}}(\sup K +b^\gamma \sup_{W\in \Kcal}\|W\|_{\infty})^{\frac{1}{4}} \int_x^b y^{\gamma-k-2}b_{k,2}\, dy.
\]
Observe that since $\gamma-k$ is never $0$, for any $k$, $x\mapsto (1+x^{\gamma-k-1})^{-1}\int_{x}^b y^{\gamma-k-2}$ is bounded on $[0,b]$.
This gives the relation
\[
b_{k+1,0}\, \leq \, C\cdot b_{k,2} p_0(W)
\]
with a constant $C$ that depends only on $\delta,$ and $ \gamma-k$.
We now assume $\ell'\geq 1$ and address $b_{k+1, \ell'}$. We address all the terms that appear in the formula for
$A_{k+1}^{(\ell)'}$, namely
\begin{gather}
  A_0^{(\ell')}(x) \cdot \int_x^b   A_k''(y)A_0^{-1}(y) \,dy \,\leq\, C \cdot x^{\gamma-\ell'}(1+x^{\gamma-k-1})b_{0,\ell'}b_{k,2}p_0(W).
\end{gather}

For the terms of the form  $A_0^{(\ell_1)}(x)A_k^{(2+\ell_2)}(x)A_0^{(\ell_3)}(x),~~\ell_1+\ell_2+\ell_3+1=\ell'$, 
we study four cases, depending on whether $\ell_1,\,\ell_3$ vanish or not. We obtain the following bounds
(up to a uniform multiplicative constant)
\[
\left \{ \begin{array}{ll}
  x^{\gamma-k-\ell'-1}b_{k,\ell'+2}, & \ell_1=0,\,\ell_3=0, \\
  x^{2\gamma-k-\ell'-1}b_{k,\ell_2+2}b_{0, \ell_3}, & \ell_1=0,~\ell_3 \neq 0, \\
  x^{2\gamma-k-\ell'-1}b_{0,\ell_1}b_{k,\ell_2+2}, & \ell_1\neq 0,~ \ell_3=0, \\
  x^{3\gamma-k-\ell'-1}b_{0,\ell_1}b_{k,\ell_2+2}b_{0,\ell_3},  & \ell_1\neq0, \ell_3\neq 0 .
\end{array}
\right .
\]
Comparing all the terms, we see that, if we factorize $x^{\gamma-\ell'-k-1}$, the remaining
powers of $x$ will be non-negative. Finally, we obtain the crude estimate
\[
  b_{k+1,\ell'} \leq C \big( \max \{ b_{0,\ell_1},~\ell_1\leq \ell'\}\big)^2\max\{ b_{k,\ell_2},~\ell_2 \leq \ell'+1 \}.
\]
This estimate is good enough to obtain the claimed result by induction on $k$.

\subsection{Solutions in the exterior region}
In this section, we show that any true solution can be approximated by WKB constructions on intervals
$[a_h,b]$ for good choices of $a_h$. Let $\psi$ be an exact solution to
\[
  h^2 \psi''\,+\,\left(E-x^\gamma W(x)\right)\psi\,=0 .
\]
Elaborating on the variation of constants methods we look for functions $A^{\pm}$ such that
\begin{equation}\label{eq:varconst}
  \left \{
    \begin{array}{cl}
      \psi \,&=\, B_+ u_{h,N}^+\,+\,B_-u_{h,N}^- ,\\
      \psi' \,&=  B_+ (u_{h,N}^+)'\,+\,B_-(u_{h,N}^-)',\\
    \end{array}
    \right .
\end{equation}
where $u_{h,N}^{\pm}$ are given by the WKB construction. More precisely, we define

\[
  u_{h,N}^+(x)\,=\,\exp\left( \frac{iS}{h}\right )\left( A_0(x)\,+\,\sum_{k=1}^N h^kA_k(x)\right)
\]
and $u_{h,N}^-= \overline{u_{h,N}^+}$.

With this choice, we have
\[
  (u_{h,N}^\pm)''\,+\,(E-x^\gamma W(x))u_{h,n}^\pm\,=\,r_{h,N}^\pm 
\]
and 
\[
  |r_{N,h}^{\pm}|\, = \,h^{N+2}|A_N''| . 
\]

We also define the Wronskian-like function:
\[
  W_{h,N}\,=\,(u_{h,N}^{+})' u_{h,N}^- -{u_{h,N}^{+}} (u_{h,N}^-)'.
\]
  
The following lemma records the needed estimates.

\begin{Lemma}
\label{lem:ubds}
  Fix $\eps \in (0,1)$, and set $a_h= h^{1-\eps}$. For any $N$, there exists constants $C$ such that 
  \begin{equation*}
    \begin{split}
      \| u_{h,N}^\pm - A_0\exp\left( \frac{\pm iS}{h} \right) \|_{C^0([a_h,b])} \,&\leq\, Ch^\eps, \\  
      \| h (u_{h,N}^\pm)' - iA_0^{-1}\exp\left( \frac{ \pm iS}{h} \right) \|_{C^0([a_h,b])} \,&\leq\, Ch^\eps.
      \end{split}
      \end{equation*}
      
      In addition, we have
      \begin{equation*}
          \begin{split}
      \|r_{h,N}^{\pm}\|_{C^0([a_h,b])}\,\leq\, Ch^{\eps{N+2}}, & \ \ \| h^2W_{h,N}'\|_{C^0([a_h,b])}\,\leq \, Ch^{\eps(N+2)},\\
      \| h^2W_{h,N} - 2ih\|_{C^0([a_h,b])}\,\,&\leq Ch^{\eps(N+2)}.
    \end{split}
  \end{equation*} 
\end{Lemma}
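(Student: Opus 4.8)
The plan is to derive each estimate directly from the definitions of $u_{h,N}^+$, $S$, $A_0$ and the bounds from Proposition~\ref{prop:estAk}, exploiting throughout that on $[a_h,b]=[h^{1-\eps},b]$ one has $x^{\gamma-k-\ell}\leq h^{(1-\eps)(\gamma-k-\ell)}$ when the exponent is negative and $x^{\gamma-k-\ell}\leq b^{\gamma-k-\ell}$ when it is nonnegative, so every power of $x$ appearing in the $A_k$'s and their derivatives is controlled by a (possibly negative) power of $h$ on this interval. Combined with the explicit factor $h^k$ in front of $A_k$ in $u_{h,N}^+$, a factor $h^k x^{\gamma-k}$ contributes at worst $h^k\cdot h^{(1-\eps)(\gamma-k)}$, and I would check that for $k\geq 1$ this is $O(h^\eps)$ (since $k+(1-\eps)(\gamma-k)=\eps k+(1-\eps)\gamma+(1-\eps)(k-k)\geq \eps + (1-\eps)\gamma>0$, and more carefully one just needs it bounded by $Ch^\eps$, which holds because the worst case $\gamma$ large only helps and $k\geq 1$ gives the $h^\eps$ from $h^k$ alone when $\gamma-k\geq 0$, while $\gamma-k<0$ forces $k$ large enough that $\eps k$ dominates). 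This is the mechanism behind the first two displayed bounds: $u_{h,N}^+ - A_0\exp(iS/h)=\exp(iS/h)\sum_{k=1}^N h^k A_k$, and each term is $O(h^\eps)$ in $C^0([a_h,b])$ by the above; similarly $hu_{h,N}^+ - iA_0^{-1}\exp(iS/h)$: here one uses $(u_{h,N}^+)'=\exp(iS/h)(\tfrac{i}{h}S'A_0 + A_0' + \cdots)$ so $h(u_{h,N}^+)'=\exp(iS/h)(iS'A_0 + O(h))=\exp(iS/h)(iq^{1/2}q^{-1/4}+O(h))=\exp(iS/h)(iq^{1/4}+O(h))$, wait — the statement writes $iA_0^{-1}=iq^{1/4}$, consistent — and the error terms again carry either an explicit $h^k$, $k\geq1$, times a controlled power of $x$, or the single term $A_0'h$ which is $h\cdot x^{\gamma-1}B_{0,1}$, bounded by $h\cdot h^{(1-\eps)(\gamma-1)}$ when $\gamma<1$, which is $O(h^\eps)$ iff $1+(1-\eps)(\gamma-1)\geq \eps$, i.e. $(1-\eps)\gamma\geq 2\eps-1$, true for $\eps$ small; the marginal case near $\gamma$ small would need to be examined, but $\gamma>0$ and $\eps\in(0,1)$ can be shrunk, and in any case absorbing constants into $C$ handles it.

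For the remainder $r_{h,N}^\pm$: by the construction recalled before the lemma, $r_{h,N}^\pm = h^{N+2}A_N''\exp(iS/h)$ (up to the $\pm$/conjugate), so $|r_{h,N}^\pm|=h^{N+2}|A_N''|\leq h^{N+2}\cdot C x^{\gamma-N-2}(1+p_{N_{N,2}}(W))^{\sigma}$ by part (3) of Proposition~\ref{prop:estAk} with $k=N$, $\ell=2$. On $[a_h,b]$, $x^{\gamma-N-2}\leq h^{(1-\eps)(\gamma-N-2)}$ if $\gamma-N-2<0$ (the relevant case once $N$ is large), giving $h^{N+2+(1-\eps)(\gamma-N-2)}=h^{(1-\eps)(N+2)+(1-\eps)\gamma + \eps\cdot 0}$... more precisely $N+2+(1-\eps)(\gamma-N-2)=\eps(N+2)+(1-\eps)\gamma\geq \eps(N+2)$, so $\|r_{h,N}^\pm\|_{C^0([a_h,b])}\leq Ch^{\eps(N+2)}$ — matching the stated bound (I'd flag the apparent typo ``$h^{\eps{N+2}}$'' should read $h^{\eps(N+2)}$). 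The Wronskian estimates come from two observations: first, $W_{h,N}$ solves a first-order linear ODE whose inhomogeneity is built from $r_{h,N}^\pm$ — explicitly, differentiating $W_{h,N}=(u^+)'u^- - u^+(u^-)'$ gives $W_{h,N}'=(u^+)''u^- - u^+(u^-)''$, and substituting $(u^\pm)''=-h^{-2}(E-x^\gamma W)u^\pm + h^{-2}r^\pm$ makes the leading terms cancel, leaving $h^2 W_{h,N}'=r_{h,N}^+ u_{h,N}^- - u_{h,N}^+ r_{h,N}^-$, whence $\|h^2W_{h,N}'\|_{C^0([a_h,b])}\leq C\|r_{h,N}^\pm\|\cdot\|u_{h,N}^\pm\|\leq Ch^{\eps(N+2)}$ using the first part of the lemma to bound $\|u_{h,N}^\pm\|_{C^0}$ by a constant. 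Second, for $\|h^2W_{h,N}-2ih\|$, I would compute the ``main term'' of $h^2 W_{h,N}$ directly from the leading WKB profiles: $h^2W_{h,N}= h(u_{h,N}^+)'\cdot h u_{h,N}^- - h u_{h,N}^+\cdot h(u_{h,N}^-)'$ divided by $h$... let me instead write $h^2 W_{h,N} = (h(u^+)')(u^-) h - \cdots$; cleaner: the leading contribution of $h^2 W_{h,N}$ is, using $u^+\approx A_0 e^{iS/h}$ and $h(u^+)'\approx iA_0^{-1}e^{iS/h}$ (and conjugates), equal to $h[(iA_0^{-1}e^{iS/h})(A_0 e^{-iS/h}) - (A_0 e^{iS/h})(-iA_0^{-1}e^{-iS/h})]=h[i + i]=2ih$, and all corrections carry extra powers of $h$ times controlled powers of $x$ on $[a_h,b]$, summable into $Ch^{\eps(N+2)}$ after choosing $N$ appropriately — though I would double-check whether the stated exponent $\eps(N+2)$ for this last bound is quite right or whether it should be a smaller power absorbed into the $N$-dependence; in any case the structure is that the error is a finite sum of monomials $h^{j} x^{\gamma m - n}$ with $j\geq 1$ and the total $h$-power positive and growing with $N$.

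The main obstacle, and the step I would spend the most care on, is the bookkeeping of powers on the matching interval: one must verify that for \emph{every} term produced by the Leibniz expansion of $(u_{h,N}^\pm)'$, $(u_{h,N}^\pm)''$, and the Wronskian — each of the form $h^{k+\text{(number of derivatives on exponentials absorbed)}} \cdot x^{(\text{exponent from } A_k^{(\ell)})}$ — the net power of $h$ on $[h^{1-\eps},b]$ is at least $\eps$ (for the first two bounds) or at least $\eps(N+2)$ (for the remainder/Wronskian bounds), and that the constants can be taken uniform in $(W,E)\in\Kcal\times K$. This uniformity is exactly what parts (2)–(3) of Proposition~\ref{prop:estAk} were designed to deliver (the constants $C_k$, $C_{k,\ell}$ depend only on $\delta$, $\gamma$, $b$ and finitely many seminorms of $W$, which are bounded over the compact $\Kcal$), so the genuine work is the elementary but fiddly exponent arithmetic — in particular isolating which monomial realizes the worst (smallest $h$-power) case in each of the five estimates, and confirming it still beats the target. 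I would present this by first recording the elementary inequality ``$x^\mu\leq \max(b^\mu, h^{(1-\eps)\mu})$ on $[h^{1-\eps},b]$'' (valid for all real $\mu$, with the first term chosen when $\mu\geq0$), then applying it termwise; no single step is deep, but the estimate will not be believed without this explicit accounting, so it cannot be skipped.
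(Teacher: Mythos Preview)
Your approach is essentially the paper's for the first four estimates: write the difference as $e^{iS/h}\sum_{k\geq 1}h^kA_k$, invoke Proposition~\ref{prop:estAk}, and use that $h/x\leq h^\eps$ on $[h^{1-\eps},b]$ so that $h^k(1+x^{\gamma-k})\leq h^k+b^\gamma(h/x)^k\leq Ch^{\eps}$ for $k\geq 1$. Your derivation $h^2W_{h,N}'=r_{h,N}^+u_{h,N}^--u_{h,N}^+r_{h,N}^-$ is exactly what the paper has in mind when it says the third and fourth bounds follow ``in a similar way''.

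The one place you diverge is the final bound on $h^2W_{h,N}-2ih$. You propose to expand $h^2W_{h,N}$ directly from the leading profiles and track all correction monomials; as you yourself note, this does not cleanly produce the exponent $\eps(N+2)$, only some positive power that you would then have to argue can be pushed up by enlarging~$N$. The paper's argument is sharper and shorter: having already bounded $\|h^2W_{h,N}'\|_{C^0([a_h,b])}\leq Ch^{\eps(N+2)}$, one simply integrates from $b$ to $x$ over an interval of length $\leq b$, and then evaluates $h^2W_{h,N}(b)$ explicitly. The point you are missing is that the boundary value is trivially computable: by construction $S(b)=0$ and $A_k(b)=0$ for every $k\geq 1$ (the defining integral in \eqref{def:WKB} has both limits equal to $b$), so $u_{h,N}^\pm(b)=A_0(b)$ and the leading part of $h^2W_{h,N}(b)$ is $2ih\,S'(b)A_0(b)^2=2ih$. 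This replaces your ``fiddly exponent arithmetic'' for the Wronskian by a one-line integration, and it is the step worth internalizing.
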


\begin{proof}
  Using the estimates of the preceding section, we have
  \[
    \left| u^+_{h,N}(x) - A_0(x)\exp\left( \frac{iS(x)}{h} \right)\right | \,\leq \, C\sum_{k=1}^{N} h^k (1+x^{\gamma-k}).
  \]
  The first estimate follows since
  \[
    \forall x\in [a_h,b],~~\left|\frac{h}{x}\right| \leq \frac{h}{a_h}\,\leq h^{\eps},
  \]
  and $h^\eps \gg h$. 
  The following three estimates are obtained in a similar way, starting from Proposition \ref{prop:estAk}.
  The last one follows from integrating the third estimate on $[a_h,b]$ and observing that
  $h^2W_{h,N}(b)\,=\,2ih$.
\end{proof}

We use these estimates to prove the following, in which it is convenient to introduce the semiclassical $C^1$ norm defined by
\[
  \| v\|_{C^1(I)} \,=\,\max \{ |v(x)|, |hv'(x)|,~~x\in I \}. 
\]

\begin{Theorem}\label{thm:extsol}
  For any $\eps\in (0,1)$ and any $D$, there exists $N,C$ and two independent solutions $\psie{\pm}$ such that,
  \[
    \| \psie{\pm} - u_{h,N}^\pm \|_{C^1([h^{1-\eps},b])} \,\leq C\, h^{D}.
  \]
\end{Theorem}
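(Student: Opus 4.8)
The plan is to set up the variation-of-constants system \eqref{eq:varconst} as a fixed-point problem for the coefficient functions $B^\pm$ and show it has a solution close to the constants that would make $\psi$ equal to a pure WKB quasimode. Concretely, I would start from an exact solution $\psi$ and, using the invertibility of the coefficient matrix $\begin{pmatrix} u_{h,N}^+ & u_{h,N}^- \\ (u_{h,N}^+)' & (u_{h,N}^-)' \end{pmatrix}$ — whose determinant is (up to sign) the Wronskian-like quantity $W_{h,N}$, which by Lemma \ref{lem:ubds} satisfies $h^2 W_{h,N} = 2ih + O(h^{\eps(N+2)})$, hence is nonvanishing on $[a_h,b]$ for $h$ small — solve for $B^\pm$ in terms of $\psi,\psi'$ and differentiate. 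Imposing \eqref{eq:varconst} forces $(B^+)' u_{h,N}^+ + (B^-)' u_{h,N}^- = 0$ and, after plugging $\psi$ into the ODE and using $(u_{h,N}^\pm)'' + q\,u_{h,N}^\pm = r_{h,N}^\pm$, one gets the first-order system $(B^\pm)' = \mp \dfrac{1}{h^2 W_{h,N}}\, r_{h,N}^\mp\, \psi$ (signs/indices to be fixed by the computation). This is exactly the classical reduction; the point is to run Grönwall on $[a_h,b]$.

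Next I would estimate. From Lemma \ref{lem:ubds}, $\|r_{h,N}^\pm\|_{C^0([a_h,b])} \leq C h^{\eps(N+2)}$ and $|h^2 W_{h,N}| \geq h$ for $h$ small, so $|(B^\pm)'| \leq C h^{-1}\cdot h^{-1}\cdot h^{\eps(N+2)} |\psi| = C h^{\eps(N+2)-2}|\psi|$. Here $|\psi|$ itself is controlled by $|B^+| + |B^-|$ times the $O(1)$ (indeed $O(h^0)$, by the first estimate in Lemma \ref{lem:ubds} and $|A_0| \leq \delta^{-1/4}$) size of $u_{h,N}^\pm$; so writing $M(x) = |B^+(x)| + |B^-(x)|$ one obtains $M'(x) \leq C h^{\eps(N+2)-2} M(x)$, and since the interval has length $\leq b$, Grönwall gives $M(x) \leq M(b)\exp(C b\, h^{\eps(N+2)-2})$. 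Choosing $N$ large enough that $\eps(N+2) - 2 > 0$ — i.e. $N > 2/\eps - 2$ — makes the exponential tend to $1$, so $M$ stays bounded and moreover $B^\pm(x) = B^\pm(b) + O(h^{\eps(N+2)-2})$ uniformly on $[a_h,b]$. Choosing the boundary data so that $B^+(b) = 1, B^-(b) = 0$ (resp. $0,1$) picks out the two solutions $\psie{+}$ and $\psie{-}$; they are independent because their Cauchy data at $b$ are, and because $u_{h,N}^\pm$ are independent up to $O(h^\infty)$ corrections.

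Finally I would convert the closeness of $B^\pm$ to constants into the claimed $C^1$ bound. Since $\psie{+} = B^+ u_{h,N}^+ + B^- u_{h,N}^-$ with $B^+ = 1 + O(h^{\eps(N+2)-2})$ and $B^- = O(h^{\eps(N+2)-2})$, and since $u_{h,N}^\pm$ and $h(u_{h,N}^\pm)'$ are bounded on $[a_h,b]$ (the latter again from Lemma \ref{lem:ubds}, as $h(u_{h,N}^+)' = iA_0^{-1}\exp(iS/h) + O(h^\eps)$ up to differentiating the amplitude, which costs only one more power of $h/x \leq h^\eps$ per the estimates of Proposition \ref{prop:estAk}), one gets $\|\psie{+} - u_{h,N}^+\|_{C^1([a_h,b])} \leq C h^{\eps(N+2)-2}$. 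Given the target exponent $D$, choose $N$ with $\eps(N+2) - 2 \geq D$, i.e. $N \geq D/\eps + 2/\eps - 2$, which is the $N$ asserted in the theorem; uniformity in $(W,E) \in \Kcal \times K$ is inherited from the uniform constants in Lemma \ref{lem:ubds} and Proposition \ref{prop:estAk}. The main obstacle — and the only genuinely delicate point — is the Grönwall step on the shrinking interval $[a_h,b]$: one must be sure that all the constants entering the bound on $(B^\pm)'$ are genuinely uniform in $h$ (and in $W,E$), since $a_h \to 0$, and that the loss of two powers of $h$ (one from the $h^{-2}$ in $1/(h^2 W_{h,N})$ written as $h^{-1}\cdot h^{-1}$, one already absorbed in the $h^2 W_{h,N} \sim 2ih$ normalization) is correctly bookkept; getting the arithmetic of exponents right is what forces $N$ to depend on both $\eps$ and $D$.
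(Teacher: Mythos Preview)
Your approach is essentially the same as the paper's: the paper recasts the variation-of-constants system as a fixed-point equation $(B_+,B_-)=(\beta_+,\beta_-)+\mathbb{H}(B_+,B_-)$ for an integral operator $\mathbb{H}$ built from $r_{h,N}^\pm u_{h,N}^\pm/(h^2W_{h,N})$, bounds $\|\mathbb{H}\|\leq Ch^{\eps(N+2)-1}$, and inverts $\id-\mathbb{H}$; your Gr\"onwall argument on $M=|B^+|+|B^-|$ is the differential form of the same contraction.

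One bookkeeping slip: since $h^2W_{h,N}=2ih+O(h^{\eps(N+2)})$, the factor $1/(h^2W_{h,N})$ is $O(h^{-1})$, not $O(h^{-2})$. With $|u_{h,N}^\pm|=O(1)$ and $|r_{h,N}^\pm|\leq Ch^{\eps(N+2)}$ you get $|(B^\pm)'|\leq Ch^{\eps(N+2)-1}M$, hence the final error is $O(h^{\eps(N+2)-1})$ and the correct threshold is $\eps(N+2)-1>D$, matching the paper. Your parenthetical about ``two powers of $h$'' double-counts: the $h^{-2}$ in front of $W_{h,N}$ and the $\sim 2ih$ size of $h^2W_{h,N}$ together yield a single $h^{-1}$. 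This does not affect the validity of the argument, only the value of $N$ you end up choosing.
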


\begin{proof}
  Starting from \eqref{eq:varconst}, we have that $(B_+,B_-)$ satisfies
  \begin{equation}\label{eq:systext}
    \left \{
      \begin{array}{c}
        h^2B_+' (u_{h,N}^+)'\,+\,h^2B_-' (u_{h,N}^-)'=B_+r_{h,N}^+\,+\,B_-r_{h,N}^-, \\
        B'_+u_{h,N}^+\,+\,B_-'u_{h,N}^-=0.\\
      \end{array}
    \right .
  \end{equation}
  For $\tl=\pm$ and $\tr=\pm$, we define the integral operator $H^{\tl ,\tr}$ on $C^0([a_h,b])$ by
  \[
    H^{\tl ,\tr}[B](x)\,=\,\int_{x}^b \frac{r_{h,N}^\tl(\xi)u_{h,N}^\tr(\xi)}{h^2W_{h,N}(\xi)}B(\xi)\,d\xi.
  \]
 We also define the matrix operator $\Hbb$ acting on  $\left(C^0([a_h,b])\right)^2$ by
  \[
    \mathbb{H} \,=\,
    \begin{pmatrix}
      H^{+,+} & H^{+,-} \\
      H^{-,+} & H^{-,-}
    \end{pmatrix}.
  \]
  If the system \eqref{eq:systext} can be inverted, we can express $B_{\pm}'$ depending on $B_{\pm}$.
  By integration, we obtain that there exist two constants $\beta_+$ and $\beta_-$ such that
  \[
    \begin{pmatrix}
      B_+ \\
      B_-
    \end{pmatrix}
    \,=\,
    \begin{pmatrix}
      \beta_+ \un \\
      \beta_- \un
      \end{pmatrix}
      \,+\,
      \Hbb
      \begin{pmatrix}
         B_+ \\
      B_-
      \end{pmatrix}.
    \]
  
    Using Lemma \ref{lem:ubds}, there exists a constant such that
    \[
      \forall \tl,\tr = \pm,~~ \| H^{\tl, \tr}\|_{\Lcal(C^0([a_h,b]))}\,\leq\, Ch^{\eps(N+2)-1}.
    \]
    It follows that $\id -\Hbb$ is invertible and that, for any choice of $(\beta_+,\beta_-)$
    \[
      \| \psi-\beta_+u_{h,N}^+,-\beta_-u_{h,N}^- \|_{C^1([a_h,b])}\,\leq\,Ch^{\eps(N+2)-1}(|\beta_+|\,+\,\beta_-|). 
    \]
    The claim follows by choosing first $N$ so that $\eps(N+2)-1>D$ and then
    $(\beta_-,\beta_+)= (1,0)$ and $(0,1)$.   
\end{proof}

\section{The interior region}
\label{sec:int}

The WKB expansion gives us a good approximation for the true solutions on intervals
of the form $[h^{1-\eps},b]$ for any $\eps$. We now show that there exists $\eps_0>0$ such that
on the interval $[0,h^{1-\eps_0}]$ we get a good approximation by comparing the
solution to trigonometric solutions. Since we can choose $\eps$ arbitrarily small, matching will
be possible on the interval $[h^{1-\eps},h^{1-\eps_0}]$. 

It is convenient to make the change of
independent variables by setting $z= \frac{\sqrt{E}}{h}x$. So that we look for
solutions to the following equation:
\begin{equation}\label{eq:rescaled}
  \ddot{v}\,+\,v\,=\,h^{\gamma}z^{\gamma}\tilde{W}(hz)v(z),
\end{equation}
where $\tilde{W}(\cdot)\,=\, E^{-(1+\gamma/2)} W(\frac{\cdot}{\sqrt{E}})$.
We study this equation on $[0,b_h]$ where $b_h$ will eventually be $h^{-\delta}$.
We assume that $\delta<1$ so that $b_h=O(h^{-1})$.
This ensures that for any $k$, $z\mapsto \tilde{W}^{(k)}(hz)$ is uniformly bounded
on $[0,b_h]$.

\begin{rem}
  Despite the rescaling, we still denote by $[0,b_h]$ the interval we are working on. Observe that
  $b_h= h^{-\delta}$ corresponds to an interval of order $h^{1-\delta}$ in the original setting.
\end{rem}

We define the following integral operators on $C^0([0,b_h])$:
\[
  \begin{split}
    K[v](z)&=\, \int_0^{z} \sin(z-\zeta) h^\gamma \zeta^\gamma \tilde{W}(h\zeta)v(\zeta)\,d \zeta, \\
    \\
    K'[v](z)&\,=\, \int_0^{z} \cos(z-\zeta) h^\gamma \zeta^\gamma \tilde{W}(h\zeta)v(\zeta)\,d \zeta.
   \end{split}           
 \]
By differentiation, we have $(K[v])'=K'[v]$.  
A straightforward computation shows that any solution $v$ to \eqref{eq:rescaled} can be written
\[
  v\,=\,a_+ e_+ \,+\, a_-e_- \,+\,K[v], 
\]
where we have defined $e_{\pm}(z)\,=\,e^{\pm iz}$.

When $(I-K)$ is invertible, we obtain a basis of solutions to \eqref{eq:rescaled} by computing
\[
  (I-K)^{-1} e_{\pm}.
\]

\begin{Lemma}
  For any $\delta \in (0,\gamma)$, set $b_h\,=\,h^{-\frac{\gamma-\delta}{\gamma+1}}$. There exists $C$ such that
  \[
    \| K\|_{\Lcal(C^0([0,b_h])}\,\leq Ch^{\delta}.
  \]
  the constant $C$ is uniform for $(W,E)$ in $\Kcal \times K$. 
  For $h$ small enough, the operator $(I-K)$ is an invertible endomorphism 
  of $C^0([0,b_h])$. 
\end{Lemma}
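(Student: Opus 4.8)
The plan is to prove the norm estimate by a direct Duhamel-type bound and then obtain invertibility of $I-K$ from the Neumann series. First I would fix $v\in C^0([0,b_h])$ and estimate pointwise, using $|\sin(z-\zeta)|\leq 1$ and $0\leq \zeta\leq z\leq b_h$:
\[
  |K[v](z)|\,\leq\, h^\gamma\Big(\sup_{0\leq\zeta\leq b_h}|\tilde W(h\zeta)|\Big)\|v\|_{\infty}\int_0^{z}\zeta^\gamma\,d\zeta
  \,\leq\, \frac{h^\gamma b_h^{\gamma+1}}{\gamma+1}\Big(\sup_{0\leq\zeta\leq b_h}|\tilde W(h\zeta)|\Big)\|v\|_{\infty}.
\]
Since $K[v]$ is an integral of a continuous function it is itself continuous (indeed $C^1$), so $K$ is a bounded endomorphism of $C^0([0,b_h])$ and the displayed inequality already bounds $\|K\|_{\Lcal(C^0([0,b_h]))}$.

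Next I would take care of the uniformity. By compactness of $\Kcal$ in $C^\infty([0,b])$ and of $K$ in $(0,+\infty)$ there are constants $M<\infty$ and $c>0$ with $\|W\|_{\infty}\leq M$ for all $W\in\Kcal$ and $\sqrt E\geq c$ for all $E\in K$. Because $hb_h=h^{(1+\delta)/(\gamma+1)}\to 0$ (the exponent is positive since $\delta,\gamma>0$), for $h$ small enough, uniformly in $(W,E)$, one has $h\zeta/\sqrt E\leq hb_h/c\leq b$ for every $\zeta\in[0,b_h]$; hence $h\zeta/\sqrt E$ lies in the domain $[0,b]$ of $W$ and $|\tilde W(h\zeta)|=|W(h\zeta/\sqrt E)|\leq M$. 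Substituting into the bound above and using the defining choice $b_h^{\gamma+1}=h^{-(\gamma-\delta)}$, so that $h^\gamma b_h^{\gamma+1}=h^{\delta}$, gives
\[
  \|K\|_{\Lcal(C^0([0,b_h]))}\,\leq\,\frac{M}{\gamma+1}\,h^{\delta}\,=\,C h^{\delta},
\]
with $C$ depending only on $\gamma$, $\Kcal$ and $K$.

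Finally I would conclude invertibility: for $h$ small enough $Ch^\delta<1$, so the Neumann series $\sum_{n\geq 0}K^n$ converges in the Banach algebra $\Lcal(C^0([0,b_h]))$ and provides a bounded two-sided inverse of $I-K$; being a bounded endomorphism of $C^0([0,b_h])$ with a bounded inverse, $I-K$ is an invertible endomorphism of $C^0([0,b_h])$.

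There is no genuine obstacle here: the estimate is a one-line application of $|\sin|\leq 1$. The only points that require a little attention are (i) verifying that the argument $h\zeta/\sqrt E$ stays in $[0,b]$, the interval on which $W$ is defined, so that $\sup|\tilde W(h\cdot)|$ is controlled by $M$ uniformly in the parameters, and (ii) the exponent bookkeeping $h^{\gamma}b_h^{\gamma+1}=h^{\delta}$, which is precisely what dictates the choice $b_h=h^{-(\gamma-\delta)/(\gamma+1)}$ (larger $b_h$ would make $\|K\|$ blow up).
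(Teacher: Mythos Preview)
Your proof is correct and follows essentially the same approach as the paper: bound $|\sin|\leq 1$, estimate $\|K[v]\|\leq C(\tilde W)\,h^\gamma b_h^{\gamma+1}\|v\|$, and use $h^\gamma b_h^{\gamma+1}=h^\delta$. You add a bit more care about the uniformity in $(W,E)$ and about the domain of $\tilde W$, which the paper leaves implicit, but the argument is the same.
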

\begin{proof}
  Fix $\delta \in (0,\gamma)$ and set $b_h$ as in the lemma. We observe that
  $b_h\,=\,O(h^{-1})$ and that $h^\gamma b_h^{\gamma+1} = O(h^\delta).$
  Since $\tilde{W}$ is continuous, we have 
  \[
   \forall v\in C^0([0,b_h])\,~ \| K[v]\|_{C^0([0,b_h])}\,\leq\, C(\tilde{W}) h^\gamma b_h^{\gamma+1} \|v\|_{C^0([0,b_h])}
 \]
 with $C(\tilde{W})\,=\,\sup \{ |\tilde{W}(z)|,~z\in [0,b_h] \}$.
  The claim follows.
\end{proof}

For the rest of this section, we fix some $\delta \in (0,\gamma)$ and set $\eps_0 = \frac{\gamma-\delta}{\gamma+1}$.
Using a Taylor expansion for $W$ near $0$, we can write, for any $N$
\[
  \tilde{W}(hz)\,=\,\sum_{j=0}^{N-1} h^jw_j z^j \,+\,h^Nz^Nr_{N}(hz), 
\]
where the $w_j$ are smooth functions of $E$ on $K$.

We thus define the operators $L_j$ on $C^0([0,b_h])$ by
\[
   L_j[v](z)\,=\, w_j\int_0^{z} \sin(z-\zeta) h^{\gamma+j} \zeta^{\gamma+j} v(\zeta)\,d \zeta.
\]
We will also need the operators $L_j'$ so that $L_j'[v] =(L_j[v])'$. The kernel of $L_j'$ is obtained by changing the
sine function in the kernel of $L_j$ to the cosine function.

A straightforward computation shows that
\[
  \forall N,~~K\,=\,\sum_{j=0}^{N-1} L_j \,+\,R_N,
\]
with
\[
 R_N[v](z)\,=\, \int_0^{z} \sin(z-\zeta) h^{\gamma+N} \zeta^{\gamma+N} r_N(h\zeta) v(\zeta)\,d \zeta. 
\]
By the same procedure as above, we estimate
\[
  \begin{split}
    \|R_N\|_{\Lcal(C^0[0,b_h])}&\,\leq\, C_Nh^{\gamma+N}b_h^{\gamma+N+1} \\
                               & \,\leq\,C_N h^{\delta_N},\\
  \end{split}
\]
with $\delta_N\,=\,\gamma+N +\frac{\delta-\gamma}{\gamma+1}(\gamma+N+1)$ and a constant $C_N$ that is uniform for
$(W,E) \in \Kcal\times K$.
We see that
\[
  \delta_N\,=\,N(1+\frac{\delta-\gamma}{\gamma+1})+\delta \geq N\frac{1+\delta}{1+\gamma}.
\]

In the same way, we also get that
\[
  \forall j, ~~\|L_j\|_{\Lcal(C^0[0,b_h])} \,\leq\, C_jh^{\delta_j},~~\text{with}~~\delta_j\,=\,\delta \,+\,j \frac{1+\delta}{1+\gamma}.
\]
All these estimates have been obtained by crudely bounding $|\sin (z-\zeta)|$ by $1$. It follows that the same estimates apply to $K', L'$
and $R_N'$ (with the now standard definition for the latter).

Finally, setting $d= \frac{1+\delta}{1+\gamma}$, we have the following norm estimates, for all $j$, $N$ and small enough $h$ , 
\begin{equation}\label{normest}
  \begin{split}
    \max \left( \| K\|_{\Lcal(C^0[0,b_h])},\| K'\|_{\Lcal(C^0[0,b_h])} \right)&\,\leq\, C h^\delta ,\\
    \max \left(\| L_j\|_{\Lcal(C^0[0,b_h])},\| L_j'\|_{\Lcal(C^0[0,b_h])}\right) &\,\leq\, C_j h^{jd},\\
    \max \left( \| R_N\|_{\Lcal(C^0[0,b_h])},\| R_N'\|_{\Lcal(C^0[0,b_h])}\right) &\,\leq\, \hat{C}_N h^{Nd},
  \end{split}
\end{equation}
uniformly for $(W,E)\in \Kcal\times K$.

Recall that, for two operators $A$ and $B$ acting on a Banach space $X$, we have
\[
  \|A\|_{\Lcal(X)},~\|B\|_{\Lcal(X)}\leq \frac{1}{2} \implies \| (\id -A)^{-1}-(\id-B)^{-1}\|_{\Lcal(X)}\,\leq\, 4\|A-B\|_{\Lcal(X)}. 
\]

It follows that
\[
  \forall D,~\exists N,~~\|(\id -K)^{-1} - (\id -\sum_{j=0}^{N-1}L_j)^{-1}\|_{\Lcal(C^0([0,b_h])}\,\leq\,4C_N h^{Nd}\,\leq h^D.
\]

Using a Neumann series expansion, we have 
\[
  (\id-\sum_{j=0}^{N-1}L_j)^{-1}\,=\,\sum_{n=0}^{\infty} \sum_{(j_0,j_1,\cdots j_{n}) \in \{ 0 ,\dots, N-1 \}^n} L_{j_n}\cdots L_{j_0}.
\]
It is convenient to write $\vec{j}=(j_0,\cdots,j_n)$ and
\[
  L_{\vec{j}}\,=\,L_{j_n}\cdots L_{j_0},
\]
so that the infinite sum can be seen as a sum over all tuples 
of arbitrary length.

We also define $L_{\vec{j}}'$ so that $L_{\vec{j}}'[v] = (L_{\vec{j}}[v])'$. By definition,
we see that $L_{\vec{j}}'$ is obtained by replacing the final $L_{j_n}$ by $L_{j_n}'$.  Using the preceding estimates \eqref{normest}, we compute
\[
  \max\left( \|L_{\vec{j}}\|_{\Lcal(C^0([0,b_h])},\|L_{\vec{j}}'\|_{\Lcal(C^0([0,b_h])}\right) \,\leq C\, h^{(n+1)\delta\,+\,(\sum_{0}^n j_i)d}. 
\]
% This leads to define the order of the t-uple $\vec{j}$ :
% \[
%   \Ncal(\vec{j})\,=\,(n+1)\delta\,+\,(\sum_{0}^n j_i)d.
% \]
For any $D$ and any $j$, we define $n_j$ to the greatest integer $n$ such that
\[
  (n+1)(\delta +jd)< D,
\]
and we define $\Jcal(D)$ to be the set of tuples such that, for any $j$,
the index $j$ occurs at most $n_j$ times in $\vec{j}$ and we set
\[
  T_D\,=\,\sum_{\vec{j}\in \Jcal_D} L_{\vec{j}}.
\]

We claim that
\[
  \|(\id-\sum_{j=0}^{N-1}L_j)^{-1} -T_D\|_{\Lcal(C^0([0,b_h]))}\leq C h^D.
\]
Indeed, we have
\[
  \|(\id-\sum_{j=0}^{N-1}L_j)^{-1} -T_D\|_{\Lcal(C^0([0,a_h]))} \,\leq\,\sum_{\vec{j}\notin \Jcal_D} \|L_{j_n}\|\cdots\|L_{j_0}\|.
\]

In the product, at least one index $j$ occurs more than $n_j$ times. Fixing an index $j_0$, the sum of all terms in which there are
more that $n_{j_0}$ factors $\|L_{j_0}\|$ is bounded above by 
\[
  \left(\sum_{n=0}^\infty \|L_{0}\|^n\right)\left(\sum_{n=0}^\infty \|L_{1}\|^n\right)\cdots
  \left(\sum_{n=n_{j_0}+1}^\infty \|L_{j_0}\|^n\right)\cdots\left(\sum_{n=0}^\infty \|L_{{N-1}}\|^n\right)\leq C h^D
\]
by definition of $n_{j_0}$. The claim follows by considering all possible $j_0$.   

Finally, we obtain 

\begin{prop}\label{prop:TD}
  For any $D$, there exists two independent solutions
  $\psi^{\pm}$ such that, for $h$ small enough 
  \[
    \| \psi^\pm - T_D e^\pm \|_{C^1([0,b_h])} \,\leq\, C h^D,
  \]
  and the constant $C$ is uniform for $(W,E)\in \Kcal\times K$.
\end{prop}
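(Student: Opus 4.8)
The plan is to reproduce the strategy already carried out for Theorem~\ref{thm:extsol}, but now with $K$ playing the role of the WKB remainder operator and the trigonometric functions $e^{\pm}$ playing the role of the WKB quasimodes. By the straightforward computation recorded above, any solution $v$ to the rescaled equation \eqref{eq:rescaled} satisfies $v = a_+e_+ + a_-e_- + K[v]$, so once $(\id - K)$ is invertible on $C^0([0,b_h])$ (which the first lemma of this section guarantees for $h$ small, with a norm bound uniform over $\Kcal\times K$) we obtain a basis of exact solutions by setting $\psi^{\pm} = (\id - K)^{-1} e_{\pm}$. The content of the proposition is therefore the quantitative statement $\|(\id-K)^{-1}e_{\pm} - T_D e_{\pm}\|_{C^1([0,b_h])} \leq Ch^D$, and the $C^0$ part of this follows immediately by composing the two operator-norm estimates already established in this section: $\|(\id-K)^{-1} - (\id - \sum_{j=0}^{N-1}L_j)^{-1}\| \leq 4C_Nh^{Nd} \leq h^D$ for $N$ chosen large (using the perturbation inequality for resolvents and the $R_N$ estimate in \eqref{normest}), together with $\|(\id-\sum_{j=0}^{N-1}L_j)^{-1} - T_D\| \leq Ch^D$ from the pruned Neumann series argument just given. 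Applying both to $e_{\pm}$, whose $C^0$ norm on $[0,b_h]$ is $1$, yields the $C^0$ bound.

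For the $C^1$ part I would differentiate. Since $(K[v])' = K'[v]$, $(L_j[v])' = L_j'[v]$ and $(L_{\vec j}[v])' = L_{\vec j}'[v]$ by construction, and since every estimate in \eqref{normest} was obtained by bounding $|\sin(z-\zeta)|$ by $1$ and hence applies verbatim with primes, the same chain of inequalities controls the derivatives: $\|\frac{d}{dz}\big((\id-K)^{-1}e_{\pm} - T_De_{\pm}\big)\|_{C^0([0,b_h])}$ is bounded by combining $h K'[(\id-K)^{-1}e_\pm]$-type terms with the primed versions of the resolvent-difference and Neumann-pruning estimates. One small point to handle carefully: $\psi^{\pm}{}' = \pm i e_{\pm} + K'[\psi^{\pm}]$ and $T_D e_{\pm}$ has derivative $\pm i e_{\pm} + (T_D e_{\pm})'$ where $(T_D e_{\pm})' = \sum_{\vec j \in \Jcal_D} L_{\vec j}' e_{\pm}$, so the leading $\pm i e_{\pm}$ terms cancel exactly and one is left estimating $\|K'[\psi^\pm] - \sum_{\vec j\in\Jcal_D}L_{\vec j}'e_\pm\|_{C^0}$, which is of the same form as the $C^0$ estimate with an extra derivative on the last factor only — exactly the situation \eqref{normest} was designed to cover. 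Since the semiclassical $C^1$ norm $\|v\|_{C^1(I)} = \max\{|v(x)|, |hv'(x)|\}$ only asks for $h$ times the derivative, and all operator norms here already carry positive powers of $h$, there is in fact room to spare.

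Finally, independence of $\psi^+$ and $\psi^-$: for $h$ small these are $h^D$-close in $C^1$ to $e_+$ and $e_-$ respectively, whose Wronskian is the nonzero constant $-2i$, so the Wronskian of $\psi^{\pm}$ is bounded away from $0$ and they are linearly independent. Uniformity of $C$ over $(W,E)\in\Kcal\times K$ is inherited at every step, since each operator-norm bound invoked (for $K$, the $L_j$, $R_N$, and hence $L_{\vec j}$) was stated with a constant uniform on $\Kcal\times K$, and the Taylor coefficients $w_j$ of $\tilde W$ depend smoothly — hence, on the compact $K$, boundedly — on $E$.

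I do not expect a serious obstacle here: the proposition is essentially a bookkeeping consolidation of the operator estimates derived earlier in the section. The only place demanding a little care is making sure the derivative estimates genuinely reduce to the primed versions of \eqref{normest} — i.e. checking that differentiating $(\id - \sum L_j)^{-1}$ or $T_D$ really does just prime the outermost operator in each Neumann term rather than producing boundary contributions — but this is immediate from $(K[v])' = K'[v]$ and its analogues, all of which hold with no boundary term because the kernels vanish at $\zeta = z$.
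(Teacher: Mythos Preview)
Your proposal is correct and follows essentially the same route as the paper: define $\psi^{\pm}=(\id-K)^{-1}e_{\pm}$, obtain the $C^0$ bound by combining the resolvent-perturbation and Neumann-pruning operator-norm estimates already established, upgrade to $C^1$ by observing that all those estimates hold verbatim with primes (since they came from bounding $|\sin(z-\zeta)|\leq 1$), and conclude independence via the Wronskian. The paper's own proof is terser but makes exactly these moves; your more explicit handling of the derivative (cancelling the leading $\pm i e_{\pm}$ terms and reducing to the primed operator estimates) is a faithful unpacking of what the paper means by ``the same proofs also yield''.
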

\begin{proof}
  Setting $\psi^\pm=(I-K)^{-1}e^{\pm}$ yields two solutions. The preceding estimates give that
  \[
    \|\psi^{\pm}-T_De^{\pm}\|_{C^0([0,b_h])}\,\leq Ch^D. 
  \]
  But the same proofs also yield that
  \[
    \|(\psi^{\pm})'-(T_De^{\pm})'\|_{C^0([0,b_h])}\,\leq Ch^D 
  \]
  so that the claim follows.
  The fact that the two solutions are independent is obtained by computing the Wronskian at $0$. The latter does not vanish since
  the Wronskian of $e^\pm$ does not vanish and the error made is of order $h^D$.
\end{proof}

\subsection{Asymptotic behavior of $\psi^\pm$}
As in the case for the solutions in the exterior region, we
first introduce the convenient setting for our asymptotic expansions.

For $\alp \in \R \setminus \mathbb{Q}$, we define the set $\Fcal_\alp$ of smooth functions on $[0,+\infty[$
that admit the following asymptotic expansion near $\infty$ :
\[
  f(z)\sim c_\alpha\,+\,\sum_{\ell \geq 0} a_\ell z^{\alpha -\ell}.
\]
For $f$ in $\Fcal_\alpha$, there exists $C$ such that
\[
  \forall z\in [0,+\infty[,~~|f(z) | \,\leq\,C(1+z^{\alpha}).
\]

\begin{rem}
    To allow $\gamma \in \mathbb{Q}$, we would have to include $z^{\alpha-\ell} \log z $ terms in the asymptotic expansion here also.
\end{rem}

A sequence $F_h$ of smooth functions on $[0,b_h]$ is said to be admissible if there exists
a sequence of functions $f_{m,n}$ such that
$f_{m,n} \in \Fcal_{m\gamma+n}$ such and
\[
  F_h(z)\,\sim\,\sum_{m\geq 1,n\geq 0} h^{m\gamma+n}f_{m,n}(z)
\]
in the following sense:
\[
  \forall D,~\exists M,N,~\| F_h -\sum_{\substack{1\leq  m\leq M\\0\leq n\leq N}} h^{m\gamma+n}f_{m,n}\|_{C^1([0,b_h])} \leq Ch^D.  
\]

We observe that this definition is legitimate because if $F_h$ is admissible, we can define the
$f_{m,n}$ recursively by a limiting procedure. 

The main result of this section is then the following:
\begin{Theorem}\label{thm:intsol}
  Fix $\delta\in (0,\gamma)$ and $b_h\,=\,h^{-\frac{\gamma-\delta}{\gamma+1}}$.
  Let $\psi^{+}$ be the solution constructed in the previous section then there exist admissible functions
  $F^{+,\pm}_h$ such that
  \[
    \psi^{+}\,=\,F^{+,+}e_+\,+\,F^{+,-}e_-.
  \]
  A similar statement holds for $\psi^-$ with admissible functions $F^{-,\pm}$.
\end{Theorem}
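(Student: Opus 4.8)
The plan is to run everything off the description of $\psi^{+}$ provided by Proposition \ref{prop:TD}: we have $\psi^{+}=T_{D}e_{+}+O(h^{D})$ in $C^{1}([0,b_h])$, where $T_{D}e_{+}=\sum_{\vec j\in\Jcal_D}L_{\vec j}e_{+}$ is a \emph{finite} sum. It therefore suffices to understand the structure of each $L_{\vec j}e_{+}$. Introduce the class
\[
  \Ecal\;=\;\bigl\{\,P\,e_{+}\,+\,Q\,e_{-}\ :\ P,Q\ \text{admissible (in particular non-oscillating)}\,\bigr\},
\]
and the plan is: (i) show each $L_{j}$ (hence each $L_{\vec j}$) maps $\Ecal$ into itself, with explicit control of the $h$-order and of the $\Fcal$-class of the amplitudes; then (ii) collect, over $\vec j\in\Jcal_D$, all contributions of a given $h$-order $h^{m\gamma+n}$ to define admissible $f^{+,\pm}_{m,n}$, absorbing the tails (over $n$ and over $\vec j$) into the admissibility remainder.

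\textbf{The closure lemma.} Writing $\sin(z-\zeta)=\tfrac1{2i}\bigl(e^{i(z-\zeta)}-e^{-i(z-\zeta)}\bigr)$, one gets, for $P$ admissible,
\[
  L_{j}[P\,e_{+}](z)\;=\;\frac{w_{j}h^{\gamma+j}}{2i}\Bigl(e_{+}\!\!\int_{0}^{z}\!\zeta^{\gamma+j}P(\zeta)\,d\zeta\ -\ e_{-}\!\!\int_{0}^{z}\!\zeta^{\gamma+j}P(\zeta)\,e^{2i\zeta}\,d\zeta\Bigr),
\]
and symmetrically for $L_{j}[Q\,e_{-}]$. Two stability facts close this: multiplication by $\zeta^{\gamma+j}$ followed by $\int_{0}^{z}$ sends an admissible function to an admissible one (the $\Fcal$-index increases, so the forbidden value $-1$ is never met, and the asymptotic structure at $\infty$ is preserved); and the oscillatory identity
\[
  \int_{0}^{z}\zeta^{\gamma+j}P(\zeta)\,e^{\pm 2i\zeta}\,d\zeta\;=\;C\ +\ e^{\pm 2iz}\,\widetilde P(z),\qquad \widetilde P\ \text{admissible},
\]
obtained by repeated integration by parts — the boundary terms at $0$ vanish while the exponent stays positive, and once it drops below $-1$ the improper integral converges and contributes $C$; near $z=0$ one instead expands $e^{\pm 2i\zeta}$, which produces a generalized Taylor expansion there (so the amplitudes are smooth on $(0,b_h]$ with generalized Taylor expansions at $0$, which is all the argument uses). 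Substituting back, the identities $e_{-}\cdot e^{2iz}=e_{+}$ and $e_{+}\cdot e^{-2iz}=e_{-}$ show that \emph{no harmonic beyond $e_{\pm}$ is ever produced}, so $L_{j}[P\,e_{+}]\in\Ecal$, with the $h$-order raised by $\gamma+j$ and the $\Fcal$-index of the amplitudes raised by $\gamma+j+1$ on the resonant piece and by $\gamma+j$ on the non-resonant one. Iterating over $\vec j=(j_{0},\dots,j_{n})$ gives $L_{\vec j}e_{+}\in\Ecal$ with $h$-order $h^{m\gamma+k}$ ($m=n+1$, $k=\sum_i j_{i}$) and amplitudes in the $\Fcal$-class read off from the resulting degree count; all constants are uniform for $(W,E)\in\Kcal\times K$ because the $w_{j}$, the $r_{N}$ and the norm bounds \eqref{normest} are.

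\textbf{Assembling the expansion.} Write $T_{D}e_{+}=\Pi\,e_{+}+\Theta\,e_{-}$ with $\Pi,\Theta$ finite combinations $\sum h^{a}g(z)$, each $g$ in an $\Fcal$-class of index $\lesssim a(\gamma+1)/\gamma$. On $[0,b_h]=[0,h^{-(\gamma-\delta)/(\gamma+1)}]$ such a term has size $O(h^{a\delta/\gamma})$, so discarding those with $a\ge D\gamma/\delta$ costs only $O(h^{D})$ in $C^{1}([0,b_h])$ and leaves a finite sum $\sum h^{m\gamma+n}f^{+,+}_{m,n}$ (resp.\ $\sum h^{m\gamma+n}f^{+,-}_{m,n}$), with each $f^{+,\pm}_{m,n}$ admissible (a finite sum of $\Fcal$-functions whose indices all differ by integers, hence lying in a single $\Fcal$-class). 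Combining with Proposition \ref{prop:TD} gives $\psi^{+}=\bigl(\sum h^{m\gamma+n}f^{+,+}_{m,n}\bigr)e_{+}+\bigl(\sum h^{m\gamma+n}f^{+,-}_{m,n}\bigr)e_{-}+O(h^{D})$ for every $D$; the functions $F^{+,\pm}_{h}$ defined by these formal sums are legitimate and admissible by the limiting/recursion argument already noted for admissible sequences, and $\psi^{+}=F^{+,+}e_{+}+F^{+,-}e_{-}$ in the corresponding sense. The statement for $\psi^{-}$ follows verbatim with $e_{+}$ and $e_{-}$ exchanged.

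\textbf{Main obstacle.} The crux is the closure lemma of the second paragraph — in particular the oscillatory-integral identity together with the claim that the amplitude classes it produces are the correct $\Fcal$-classes with constants uniform over $\Kcal\times K$: one must track, through compositions of arbitrary length, both the $\Fcal$-index (which grows by one more than the $h$-exponent at each resonant step) and the harmonics $e^{\pm 2ikz}$ that appear transiently and collapse back to $e_{\pm}$. By contrast, the uniformity over the infinitely many tuples and the convergence of the remainder are routine, being already encoded in \eqref{normest} and in the definition of $\Jcal_D$.
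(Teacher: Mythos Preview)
Your proposal is correct and follows essentially the same route as the paper: reduce via Proposition~\ref{prop:TD} to the finite sum $T_D e_+$, then prove a closure lemma (the paper's Lemma~\ref{lem:L_jF}) showing that each $L_j$ sends $fe_\pm$ with $f\in\Fcal_\alpha$ to $h^{\gamma+j}(f_+e_++f_-e_-)$ with $f_\pm$ in the shifted class, by splitting $\sin(z-\zeta)$ into exponentials and handling the oscillatory integral by repeated integration by parts. Your remark that the resulting amplitudes are only smooth on $(0,b_h]$ with a generalized Taylor expansion at $0$ (rather than smooth on $[0,\infty)$ as in the stated definition of $\Fcal_\alpha$) is in fact a small correction to the paper's formulation and does not affect the argument.
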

\begin{proof}
  By definition of admissibility and using Proposition \ref{prop:TD}, it suffices to show that $T_De^+$ is admissible.
  Since $T_D$ is a finite sum of operators $L_{\vec{j}}$, it suffices to study $L_{\vec j}(e^+)$. Fix $\vec{j}=(j_0,\cdots, j_m)$,
  and set $n\,=\,\sum\limits_{i=0}^n j_i$.
  Using Lemma \ref{lem:L_jF} below, and a straightforward induction, we have that
  \[
    L_{j_n}L_{j_{n-1}}\cdots L_{j_0} (e^+)\,=\,h^{(m+1)\gamma+n}\left(f_{\vec{j}}^+ e^+\,+\,f_{\vec{j}}^- e^-\right) 
  \]
  with
  \[
    f_{\vec{j}}^\pm \in \Fcal_{(n+1)(\gamma+1)\,+\,\sum_{i=0}^N j_i\,}.
  \]
\end{proof}

\begin{Lemma}\label{lem:L_jF}
  For any $j$, and any $\alpha$.If $f,g \in \Fcal_{\alpha}$, there exists $f_+,g_+$ and $f_-,g_-$ in $\Fcal_{\alpha+\gamma+j+1}$
  such that
  \[
    L_{j}[fe^+]\,=\,h^{\gamma+j}(f_+e^+\,+\,f_-e^-),~~ L_{j}[ge^-]\,=\,h^{\gamma+j}(g_+e^+\,+\,g_-e^-).
  \]
\end{Lemma}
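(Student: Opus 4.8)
The plan is to open the kernel of $L_j$ by writing $\sin(z-\zeta)=\frac{1}{2i}\big(e^{i(z-\zeta)}-e^{-i(z-\zeta)}\big)$, which turns $L_j$ applied to $fe^+$ into two one–dimensional integrals — a non–oscillatory one and one carrying a phase $e^{2i\zeta}$ — and then to analyse each integral as $z\to\infty$. Setting $\Phi(\zeta)=\zeta^{\gamma+j}f(\zeta)$, one gets
\[
  L_j[fe^+](z)\,=\,\frac{w_j\,h^{\gamma+j}}{2i}\left(e^{iz}\int_0^z\Phi(\zeta)\,d\zeta\;-\;e^{-iz}\int_0^z\Phi(\zeta)\,e^{2i\zeta}\,d\zeta\right).
\]
Because $f\in\Fcal_\alpha$ and $\gamma+j>0$, the function $\Phi$ vanishes at $0$ and, together with its derivatives, has a classical asymptotic expansion at infinity with top exponent $\alpha+\gamma+j$ (so $\Phi^{(p)}$ has top exponent $\alpha+\gamma+j-p$); here I use that $\alpha\geq 0$, which is the only situation that occurs when the lemma is applied.

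First I would treat the non–oscillatory integral $I(z)=\int_0^z\Phi(\zeta)\,d\zeta$. Splitting off a fixed initial segment (on which the integral is a constant) and integrating the asymptotic expansion of $\Phi$ term by term on the rest, and observing that an antiderivative raises every exponent by one, one finds $I\in\Fcal_{\alpha+\gamma+j+1}$. Next I would treat the oscillatory integral $J(z)=\int_0^z\Phi(\zeta)e^{2i\zeta}\,d\zeta$ by repeated integration by parts, always differentiating $\Phi$ and integrating $e^{2i\zeta}$, started away from $0$ so that only a constant is lost. After $P$ steps this gives $J(z)=C_P+e^{2iz}\sum_{p=0}^{P-1}\frac{(-1)^p}{(2i)^{p+1}}\Phi^{(p)}(z)+R_P(z)$, and once $P>\alpha+\gamma+j+1$ the remainder $R_P$ is a constant plus a term of size $O(z^{\alpha+\gamma+j+1-P})$. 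Letting $P\to\infty$ one reads off a genuine constant $c_\infty$ and a function $Q$ with leading term $\frac{1}{2i}\Phi(z)$, hence $Q\in\Fcal_{\alpha+\gamma+j}$, such that $J(z)=c_\infty+e^{2iz}Q(z)$.

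Substituting these two identities, the factor $e^{2iz}$ in $J(z)=c_\infty+e^{2iz}Q(z)$ combines with the prefactor $e^{-iz}$ to produce $e^{iz}$, and one obtains
\[
  L_j[fe^+]\,=\,h^{\gamma+j}\left(\frac{w_j}{2i}\big(I-Q\big)\,e^+\;+\;\Big(-\frac{w_j\,c_\infty}{2i}\Big)\,e^-\right),
\]
so one takes $f_+=\frac{w_j}{2i}(I-Q)$ and the constant $f_-=-\frac{w_j\,c_\infty}{2i}$. Since each $\Fcal_\beta$ is a vector space that contains the constants and satisfies $\Fcal_{\beta-1}\subset\Fcal_\beta$, both $f_+$ and $f_-$ lie in $\Fcal_{\alpha+\gamma+j+1}$, as claimed. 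For $L_j[ge^-]$ I would argue by conjugation: the kernel of $L_j$ is real and $h,w_j$ are real, so $\overline{L_j[v]}=L_j[\overline v]$ and $\overline{e^+}=e^-$; applying the case just proved to $\overline g\in\Fcal_\alpha$ and conjugating gives $L_j[ge^-]=h^{\gamma+j}(g_+e^++g_-e^-)$ with $g_+=\overline{(\overline g)_-}$ a constant and $g_-=\overline{(\overline g)_+}\in\Fcal_{\alpha+\gamma+j+1}$.

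The step I expect to be the real work is the asymptotic analysis of the oscillatory integral $J$: producing an expansion valid to all orders with the correct leading exponent $\alpha+\gamma+j$ (one lower than for the plain antiderivative $I$), together with the clean separation of the genuine constant $c_\infty$ from the oscillating part, while accommodating the limited regularity of $\Phi$ at the origin. Everything else — the behaviour of $I$, the recombination, and the bookkeeping of which class each term lands in — is routine.
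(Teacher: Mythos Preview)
Your argument is correct and follows the same route as the paper: decompose $\sin(z-\zeta)$ into exponentials, reduce to the two integrals $\int_0^z \zeta^{\gamma+j}f(\zeta)\,d\zeta$ and $\int_0^z e^{2i\zeta}\zeta^{\gamma+j}f(\zeta)\,d\zeta$, treat the first by termwise integration and the second by repeated integration by parts, and dispose of the $ge^-$ case by complex conjugation. The only cosmetic difference is that the paper first replaces $f$ by the terms of its expansion and then integrates by parts on each power $\zeta^\beta$, whereas you integrate by parts on the full $\Phi$ and read off the expansion afterwards; and you make explicit the extraction of the constant $c_\infty$ and the restriction $\alpha\geq 0$, which the paper leaves implicit but which is indeed all that is needed in the application.
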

\begin{proof}
  Using complex conjugation, it suffices to obtain asymptotic expansions for the following expressions :
  \[
    \int_0^z \zeta^{\gamma+j}f(\zeta) \, d\zeta,~~e^{-2iz} \int_0^z e^{2i\zeta}\zeta^{\gamma+j}f(\zeta)d\zeta.
  \]
  For the former, replacing $f$ by each term in its asymptotic expansion, we obtain the result directly.
  For the latter, we also replace $f$ by each term in its asymptotic expansion. We obtain an asymptotic expansion
  for the resulting integral using repeated integration by parts and the fact that the power is never $-1$.
\end{proof}

\section{Matching}
\label{sec:matching}

We recall that for $\delta\in (0,\gamma)$, we define $\eps= \frac{\gamma-\delta}{\gamma+1}$.
We choose $\eps_0<\eps_1$ in this range and observe that our construction for
the interior solutions is valid on $[0,h^{1-\eps_1}]$ and the WKB construction is valid on $[h^{1-\eps_0},b]$. 
In the sequel the $\eps$ will be in $[\eps_0,\eps_1]$ and we will set $I\,=\,[h^{1-\eps_0},h^{1-\eps_1}]$. 

\subsection{Exponent set for the interior solutions}
Since $I\subset [0,h^{1-\eps_0}]$, according to Theorem \ref{thm:intsol}, there exist two independent solutions $\psii{\pm}$ that admit the following asymptotic expansion :

\[
  \begin{split}
    \psii{+}(x)\,\sim\,&\exp\left(i\frac{x\sqrt{E}}{h}\right)\left[ 1\,
                         +\,\sum_{m\geq 1, n\geq 0} h^{m\gamma+n}\left( c_{m,n}^{+,+}+f_{m,n}^{+,+}\left( \frac{x\sqrt{E}}{h}\right) \right)\right] \\
                       & + \exp\left(-i\frac{x\sqrt{E}}{h}\right)\,
                         \sum_{m\geq 1, n\geq 0} h^{m\gamma+n} \left( c_{m,n}^{+,-}\,+\,f_{m,n}^{+,-}\left( \frac{x\sqrt{E}}{h} \right) \right), \\
    \psii{-}(x)\,\sim\,& \exp\left(i\frac{x\sqrt{E}}{h}\right)\,
                         \sum_{m\geq 1, n\geq 0} h^{m\gamma+n}\left(c_{m,n}^{-,+}\,+\,f_{m,n}^{-,+}\left(\frac{x\sqrt{E}}{h}\right)\right)\\
                       & + \exp\left(-i\frac{x\sqrt{E}}{h}\right)\left[ 1\,+\,
                         \sum_{m\geq 1, n\geq 0} h^{m\gamma+n}\left(c_{m,n}^{-,-}\,+f_{m,n}^{-,-}\left(\frac{x\sqrt{E}}{h} \right) \right)\right].\\
  \end{split}
\]
Each term in the asymptotic expansions can be written as
\[
  h^{m\gamma+n}~~\text{or}~~h^{m\gamma+n}\left(\frac{x}{h}\right)^{m\gamma +n -\ell}.
\]

We obtain the following proposition

\begin{prop}\label{prop:expsetint}
  For $\eps \in (\eps_0,\eps_1)$, there exist asymptotic expansions ${\fpip_h, \fpim_h}$, ${\fqip_h, \fqim_h}$ such that
  \[
    \begin{split}
      \psii{+}(h^{1-\eps})\,=\,& \exp(ih^{-\eps}\sqrt{E})(1\,+\,\fpip_h) \,+\,\exp(-ih^{-\eps}\sqrt{E})\fqip_h, \\
      \psii{-}(h^{1-\eps})\,=\,& \exp(ih^{-\eps}\sqrt{E})\fpim_h \,+\,\exp(-ih^{-\eps}\sqrt{E})(1+\fqim_h).
    \end{split}  
  \]
  Moreover, the exponent set for these asymptotic expansions is
  \[
    \Acal^{\mathrm{int}} =\big\{ m\gamma+n,~m\geq 1,n\geq 0\big\} \cup
    \big \{ \eps \ell\,+\,(1-\eps)(m\gamma+n), \ell \geq 0, m\geq 1, n\geq 0\big \}.
  \]  
\end{prop}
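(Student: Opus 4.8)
The plan is to start from the asymptotic expansions of $\psii{\pm}$ furnished by Theorem \ref{thm:intsol}, specialized at the point $z = h^{-\eps}\sqrt{E}$ (which corresponds to $x = h^{1-\eps}$ under the rescaling $z = \frac{\sqrt{E}}{h}x$), and then carefully bookkeep the powers of $h$ that result. Concretely, each building block of the expansion of, say, $\psii{+}$ is of the form $h^{m\gamma+n} f_{m,n}^{+,\pm}(z)$ with $f_{m,n}^{+,\pm} \in \Fcal_{m\gamma+n}$, multiplying $e^{\pm iz}$; the constant terms $c_{m,n}^{+,\pm}$ are the $\alpha$-homogeneous pieces $c_\alpha$ in the expansion of those $\Fcal$-functions. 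First I would extract from the definition of $\Fcal_\alpha$ that, for $f \in \Fcal_\alpha$, one has $f(z) = c_\alpha + \sum_{\ell\geq 0} a_\ell z^{\alpha-\ell} + (\text{error})$, so that $h^{m\gamma+n} f_{m,n}(z)$ evaluated at $z = c\,h^{-\eps}$ produces, on the one hand, the constant $h^{m\gamma+n} c_{m\gamma+n}$ contributing to the exponent $m\gamma+n$, and on the other hand terms $h^{m\gamma+n} a_\ell (c h^{-\eps})^{m\gamma+n-\ell} = (\text{const})\, h^{m\gamma+n - \eps(m\gamma+n-\ell)}$. Rewriting the latter exponent as $\eps\ell + (1-\eps)(m\gamma+n)$ gives exactly the second family of exponents in $\Acal^{\mathrm{int}}$; letting $\ell$ range over $\geq 0$ and $(m,n)$ over $m\geq 1, n\geq 0$ accounts for every exponent that can appear.

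The key steps, in order, are: (i) record the rescaling dictionary $z = \frac{\sqrt E}{h} x$ so that $x = h^{1-\eps} \Leftrightarrow z = \sqrt{E}\,h^{-\eps}$, and note that $h^{-\eps} \to \infty$, so the $\infty$-asymptotics of the $\Fcal$-functions are the relevant ones; (ii) substitute into the expansion of Theorem \ref{thm:intsol} and separate the $e^{+iz}$-coefficient from the $e^{-iz}$-coefficient, defining $\fpip_h$ as the $e^{+iz}$-coefficient of $\psii{+}$ minus its leading $1$, $\fqip_h$ as the $e^{-iz}$-coefficient, and analogously for $\psii{-}$; (iii) expand each $f_{m,n}(\sqrt E\, h^{-\eps})$ using membership in $\Fcal_{m\gamma+n}$ and collect powers of $h$, obtaining the two exponent families; (iv) verify that the discreteness and lower-boundedness requirements in the definition of a generalized Taylor expansion (here in the variable $h$) are met — the exponent set $\Acal^{\mathrm{int}}$ is a discrete subset of $[\min(\gamma, (1-\eps)\gamma), +\infty)$ since $\eps\ell + (1-\eps)(m\gamma+n) \to \infty$ as the indices grow and only finitely many triples $(\ell,m,n)$ lie below any given bound; and (v) confirm uniformity of the constants $C_N$ for $(W,E) \in \Kcal\times K$, which is inherited directly from the uniformity already built into Theorem \ref{thm:intsol} and the $\Fcal_\alpha$-bounds $|f(z)| \leq C(1+z^\alpha)$.

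I expect the main obstacle to be step (iii)–(iv): making sure that the double reindexing does not lose or spuriously create exponents, and in particular checking that the $\ell = 0$ case of the second family, which gives $(1-\eps)(m\gamma+n)$, is genuinely present and is smaller than the corresponding $m\gamma+n$ from the first family (so the two families are not redundant but neither do they conflict in the ordering needed for a well-defined generalized Taylor expansion). A related subtlety is that an individual $f_{m,n}$ may be $O(z^{-\infty})$-close to its constant, in which case it contributes only to the first family; this is harmless but should be noted so that the claimed exponent set is an admissible (possibly over-large) choice rather than the minimal one, exactly in the spirit of the ambiguity discussed after the definition of generalized Taylor expansion. Once these points are settled, assembling $\fpip_h, \fqip_h, \fpim_h, \fqim_h$ and reading off $\Acal^{\mathrm{int}}$ is routine.
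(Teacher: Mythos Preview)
Your proposal is correct and follows essentially the same approach as the paper: the paper also starts from the expansions of Theorem \ref{thm:intsol}, writes each term as either $h^{m\gamma+n}$ (from the constant part of the $\Fcal_{m\gamma+n}$-function) or $h^{m\gamma+n}(x/h)^{m\gamma+n-\ell}$ (from the power terms), and then evaluates at $x_h=h^{1-\eps}$ to obtain the identity $h^{m\gamma+n}(x_h/h)^{m\gamma+n-\ell}=h^{\eps\ell}h^{(1-\eps)(m\gamma+n)}$, which is exactly your reindexing in step (iii). Your additional remarks on discreteness, uniformity, and the possibility that the exponent set is not minimal are valid but go a bit beyond what the paper records in its (very brief) proof.
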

\begin{proof}
When evaluating the preceding asymptotic expansions at $x_h \defeq h^{1-\eps}$, we obtain terms corresponding
to $h^{m\gamma+n}$ that yield the first part of the exponent set. The second set comes from the terms
\[
  h^{m\gamma +n}\left(\frac{x_h}{h}\right)^{m\gamma+n-\ell}\,=\,h^{\eps \ell}h^{(1-\eps)(m\gamma+n)}.
\]
\end{proof}

\subsection{Exponent set for the WKB solutions}
Since $I\subset [h^{1-\eps_0},b]$, the WKB construction yields 
two independent solutions to the Schrödinger equation $\psie{\pm}$ such that

\begin{equation}\label{eq:asympext}
  \begin{split}
  \psie{+}(x)\,\sim\,&\exp\left( \frac{iS(x)}{h} \right)\left [ q(x)^{-\frac{1}{4}}\,+\,\sum_{k\geq 1} h^k A_k^+(x)\right ],\\
  \psie{-}(x)\,\sim\,&\exp\left( \frac{-iS(x)}{h} \right)\left [ q(x)^{-\frac{1}{4}}\,+\,\sum_{k\geq 1} h^k A_k^-(x)\right ].
  \end{split}
\end{equation}

We now proceed to get asymptotic expansions when we evaluate at $x_h=h^{1-\eps}$.
From Theorem \ref{thm:extsol}, we get
\[
  \begin{split}
     q(x)^{- \frac{1}{4}}\,\sim\,& E^{-\frac{1}{4}} \,+\,\sum_{m\geq 1,n\geq 0} a_{0,m,n} x^{m\gamma+n},\\
    h^kA_k(x) \,\sim\,~& h^kc_k\,+\,\left(\frac{h}{x}\right)^k\sum_{m\geq 1,n\geq 0} a_{k,m,n}x^{m\gamma +n}.\\
  \end{split}
\]
Evaluating at $x_h$ we obtain that each expression inside brackets in \eqref{eq:asympext} admits an asymptotic
expansion with exponent set
\[
  \{ k\eps+(1-\eps)(m\gamma+n) k\geq 0, m\geq 1, n\geq 0\} \cup \N.
\]

We now study the prefactor by computing 
\[
  \begin{split}
    S(x) =& -\int_{x}^b \sqrt{E-V(y)}dy \\
          &= -\int_0^b\sqrt{E-V(y)}\,dy\,\,+\,\,x\sqrt{E} \,+\,\int_{0}^x \left[ \sqrt{E-V(y)}-\sqrt{E}\right]\,dy.
  \end{split}
\]
Setting
\[
  \sigma_E\,=\,\int_0^b\sqrt{E-V(y)}\,dy~\text{and}~T_E(x)\,=\,\int_{0}^x \left[ \sqrt{E-V(y)}-\sqrt{E}\right]\,dy,
\]
we observe that $T_E$ has a generalized Taylor expansion with exponent set
\[
  \{ m\gamma+n+1,~m\geq 1, n\geq 0\}.
\]
  
It follows that there exist coefficients $t_{m,n}$ such that
\[
  \frac{i}{h}T_E(h^{1-\eps}) \,=\,h^{(1-\eps)(\gamma+1)-1}\sum_{m\geq 0,n\geq 0} t_{m,n}h^{(1-\eps)(m\gamma+n)}.
\]
Recalling that $\eps\,=\,\frac{\gamma-\delta}{\gamma+1}$, we observe that
\[
  (1-\eps)(\gamma+1)-1\,=\,\delta >0,
\]
so that the preceding expression is $O(h^\delta)$.
More precisely, we obtain that
\[
  \exp\left(-\frac{i}{h}T_E(h^{1-\eps})\right)\,=\,1\,+\,t_h,
\]
where $t_h$ admits an asymptotic expansion with exponent set
\[
  \{ \ell\delta + m(1-\eps)\gamma + n(1-\eps),~\ell \geq 1, m\geq 0, n\geq 0\}.
\]

We obtain the following.

\begin{prop}\label{prop:expsetext}
  For $\eps \in (\eps_0,\eps_1)$, there exist asymptotic expansions denoted $\fpep_h, \fpem_h,$ such that
  \[
    \begin{split}
      \psie{+}(h^{1-\eps})\,=\,& \exp\left(-\frac{\sigma_E}{h}\right)\exp(ih^{-\eps}\sqrt{E}) E^{-\frac{1}{4}}
                                 (1\,+\,\fpep_h)  ,\\
      \psie{-}(h^{1-\eps})\,=\,& \exp\left(\frac{\sigma_E}{h}\right)\exp(-ih^{-\eps}\sqrt{E}) E^{-\frac{1}{4}}(1\,+\,\fpem_h).\\
    \end{split}  
  \]
  The exponent set for these asymptotic expansions is
  \[
    \Acal^{\mathrm{ext}} \,=\,\big \{ k\eps\,+\,\ell\delta\,+\,m(1-\eps)\gamma\,+\,n(1-\eps), k\geq 0, \ell\geq 1, m\geq 0, n\geq 0 \big\}\cup \N.
  \]  
\end{prop}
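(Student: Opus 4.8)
The plan is to reduce everything, via Theorem~\ref{thm:extsol}, to the explicit quasimode $u_{h,N}^{+}$ and then to read off the expansion at $x=h^{1-\eps}$ from the generalized Taylor expansions already at hand. Fix $D$; by Theorem~\ref{thm:extsol} there are $N,C$ with $\|\psie{+}-u_{h,N}^{+}\|_{C^1([h^{1-\eps},b])}\leq Ch^{D}$, hence $\psie{+}(h^{1-\eps})=u_{h,N}^{+}(h^{1-\eps})+O(h^{D})$ with $C$ uniform over $\Kcal\times K$. It therefore suffices to expand $u_{h,N}^{+}(h^{1-\eps})=\exp\big(\tfrac{i}{h}S(h^{1-\eps})\big)\big[q(h^{1-\eps})^{-1/4}+\sum_{k=1}^{N}h^{k}A_k(h^{1-\eps})\big]$ modulo $O(h^{D})$.

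I would then treat amplitude and phase separately, as is essentially done in the lines preceding the statement. For the amplitude, truncate the generalized Taylor expansions of $q^{-1/4}$ and of each $A_k$ ($1\leq k\leq N$) at a high enough order; after substituting $x=h^{1-\eps}$ the remainders are $O(h^{(1-\eps)M})$, resp.\ $O(h^{k+(1-\eps)M_k})$, hence $O(h^{D})$ for $M,M_k$ large, and the surviving terms carry exponents $k\eps+(1-\eps)(m\gamma+n)$ ($k\geq0$, $m\geq1$) and nonnegative integers, so the bracket equals $E^{-1/4}\big(1+E^{1/4}G_h\big)+O(h^{D})$ with $G_h$ a finite sum whose exponents lie in $\{k\eps+(1-\eps)(m\gamma+n):k\geq0,m\geq1,n\geq0\}\cup\N$, all positive. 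For the phase, $\tfrac{i}{h}S(h^{1-\eps})=-\tfrac{i\sigma_E}{h}+ih^{-\eps}\sqrt E+\tfrac{i}{h}T_E(h^{1-\eps})$; since $T_E$ has exponent set $\{m\gamma+n+1\}$ and $(1-\eps)(\gamma+1)-1=\delta>0$, the last term is $O(h^{\delta})$ with exponents $\{\delta+(1-\eps)(m\gamma+n)\}$, and expanding the exponential series to order $\lceil D/\delta\rceil$ gives $\exp\big(\tfrac{i}{h}T_E(h^{1-\eps})\big)=1+t_h+O(h^{D})$, where $t_h$ is a finite sum with exponents in $\{\ell\delta+(1-\eps)(m\gamma+n):\ell\geq1,m\geq0,n\geq0\}$, again positive.

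Multiplying everything, $u_{h,N}^{+}(h^{1-\eps})=\exp\big(-\tfrac{i\sigma_E}{h}\big)\exp(ih^{-\eps}\sqrt E)\,E^{-1/4}\,(1+t_h)(1+E^{1/4}G_h)+O(h^{D})$, so I set $1+\fpep_h$ equal to $(1+t_h)(1+E^{1/4}G_h)$ truncated at order $D$; then $\fpep_h$ has no constant term and exponent set $\Acal(G_h)\cup\Acal(t_h)\cup\big(\Acal(G_h)+\Acal(t_h)\big)$. The crux is to verify this lies in $\Acal^{\mathrm{ext}}$. Here one uses the defining relation $\eps=\tfrac{\gamma-\delta}{\gamma+1}$ in the equivalent form $(1-\eps)\gamma=\delta+\eps$: it turns $k\eps+(1-\eps)(m\gamma+n)$ into $(k+m)\eps+m\delta+n(1-\eps)$, which belongs to $\Acal^{\mathrm{ext}}$ precisely because $m\geq1$ supplies the required multiplicity $\ell=m\geq1$, and it turns the exponents of $t_h$ into $m\eps+(\ell+m)\delta+n(1-\eps)\in\Acal^{\mathrm{ext}}$. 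Together with $\N\subset\Acal^{\mathrm{ext}}$ and the fact that $\Acal^{\mathrm{ext}}$ is stable under addition (the $\delta$-multiplicities add and stay $\geq1$; an $\N$-summand only shifts the $\eps$- and $(1-\eps)$-multiplicities, using $1=\eps+(1-\eps)$), this gives $\Acal(\fpep_h)\subset\Acal^{\mathrm{ext}}$. Since $\Acal^{\mathrm{ext}}$ is finitely generated inside $\R_{\geq0}$, it is discrete and bounded below, so it is a legitimate exponent set, and letting $D\to\infty$ produces the asymptotic expansion. The argument for $\psie{-}$ is identical with $S$ replaced by $-S$ (hence $\exp(-\tfrac{i}{h}T_E)$, with the same exponent set), yielding $\fpem_h$; uniformity over $\Kcal\times K$ is inherited at each step from Proposition~\ref{prop:estAk}, Theorem~\ref{thm:extsol}, and the uniform control of $T_E$ and of $W$.

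The step I expect to be the main obstacle is the exponent-set bookkeeping of the last paragraph: recognizing that the ``extra'' exponents $(1-\eps)(m\gamma+n)$ generated by the WKB amplitude are absorbed into $\Acal^{\mathrm{ext}}$ thanks to the identity $(1-\eps)\gamma=\delta+\eps$ built into the choice of $\eps$, and checking the additive closure of $\Acal^{\mathrm{ext}}$. Everything else is the routine, if bulky, matching of the $O(x^{M})$ remainders of the generalized Taylor expansions with the $O(h^{D})$ remainders of admissible expansions, much of which is already carried out just before the statement.
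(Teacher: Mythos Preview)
Your argument is correct and follows the same route as the paper: reduce to $u_{h,N}^{+}$ via Theorem~\ref{thm:extsol}, split the phase as $-\sigma_E+x\sqrt{E}+T_E(x)$, expand $\exp(\tfrac{i}{h}T_E(h^{1-\eps}))=1+t_h$, and read the exponent set off the product of $(1+t_h)$ with the amplitude sum $c_kh^k+h^{k\eps}\sum a_{k,m,n}h^{(1-\eps)(m\gamma+n)}$. The paper's own proof simply points to that product and asserts the exponent set; you have supplied the missing bookkeeping, and your use of the identity $(1-\eps)\gamma=\delta+\eps$ to force $\ell\geq1$ is exactly what is needed to see that the amplitude exponents $k\eps+(1-\eps)(m\gamma+n)$ with $m\geq1$ land in $\Acal^{\mathrm{ext}}$ rather than merely in the $\ell=0$ stratum.
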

\begin{proof}
  The exponent set comes from studying the product
  \[
    (1+t_h)\left(c_kh^k\,+\,h^{k\eps}\sum_{m\geq 1, n\geq 0} a_{k,m,n} h^{(1-\eps)(m\gamma+n)}\right)
  \]
  for each $k$.
  Since the exponent set that we obtain is discrete, the claim follows.
\end{proof}

\subsection{Proof of Theorem \ref{thm:matching}}
Let $\phi$ be a solution to the Schrödinger equation.
In the interior region, we can write
\[
    \phi \,=\, \api \psii{+}\,+\,\ami \psii{-} 
\]
and in the exterior region, we write 
\[
    \phi \,=\, \ape \psie{+}\,+\,\ame \psie{-}. 
\]
Using Theorem \ref{thm:extsol} we have
\[
\begin{split}
\psie{\pm}(b) & = \, (E-V(b))^{-\frac{1}{4}}\\ 
h(\psie{\pm})' (b) & = \pm i (E-V(b))^{\frac14}\left( 1\,+\,hg^{\pm}_h\right),
\end{split}
\]
where $g^\pm_h$ has an asymptotic expansion in integral powers of $h$. It follows that 
\begin{equation}\label{eq:evalb}
  \begin{pmatrix}
    (E-V(b))^{\frac{1}{4}}\ \phi(b) \\
    (E-V(b))^{-\frac{1}{4}}\ h\phi'(b)
  \end{pmatrix}
  \,=\,
  \begin{pmatrix}
    1 & 1\\
    i & -i
  \end{pmatrix}
  \begin{pmatrix}
  1 + \frac{h}{2}g^+_h & -\frac{h}{2} g^-_h \\
  -\frac{h}{2}g^+_h & 1+\frac{h}{2}g^-_h 
  \end{pmatrix}
  \begin{pmatrix}
    \ape\\
    \ame
  \end{pmatrix}.
\end{equation}

In the matching region, both expressions for $\phi$ are valid. Thus, evaluating at
$h^{1-\eps}$ we obtain
\[
\begin{split}
    & \exp(ih^{-\eps}\sqrt{E})\big( \api(1\,+\,\fpip_h) + \ami \fpim_h\big)\\
    &  \hspace{1cm} \,+\, \exp(ih^{\eps}\sqrt{E})\big( \api \fqip_h \,+\,\ami (1\,+\,\fqim_h) \big)\\
    & = \exp(ih^{-\eps}\sqrt{E})\ape\exp\left(\frac{-i\sigma_E}{h}\right)E^{-\frac{1}{4}}\big(1\, +\,\fpep_h\big),  \\
    & \hspace{1cm} \,+\,
    \exp(-ih^{-\eps}\sqrt{E})\ame\exp\left(\frac{i\sigma_E}{h}\right)E^{-\frac{1}{4}}\big(1\,+\,\fpem_h\big).
  \end{split}
\]  
We multiply by $E^{\frac{1}{4}}$, and observe that we can pairwise identify the asymptotic expansions in front
of $\exp(\pm ih^{-\eps}\sqrt{E})$ yielding the following system
of equations:
\begin{multline*}
  \begin{pmatrix}
    1\,+\, \fpip_h & \fpim_h \\
    \fqip_h & 1+\fqim_h
  \end{pmatrix}
  \begin{pmatrix}
    E^{\frac{1}{4}}\ \api\\
    E^{\frac{1}{4}}\ \ami
  \end{pmatrix}\\
  \,=\,
  \begin{pmatrix}
    (1 + \fpep_h)\exp\left(\frac{-i\sigma_E}{h}\right) & 0 \\
    0 & (1+\fpem_h)\exp\left(\frac{+i\sigma_E}{h}\right)
    \end{pmatrix}
    \begin{pmatrix}
      \ape\\
      \ame
    \end{pmatrix}.
  \end{multline*}
    
Then, we observe that 
\[
  \begin{pmatrix}
    E^{\frac{1}{4}}\ \phi(0) \\
    E^{-\frac{1}{4}}\ h\phi'(0)
  \end{pmatrix}
  \,=\,
  \begin{pmatrix}
    1 & 1\\
    i & -i
  \end{pmatrix}
  \begin{pmatrix}
    E^{\frac{1}{4}}\ \api\\
    E^{\frac{1}{4}}\ \ami
  \end{pmatrix}
\]
so that, using \eqref{eq:evalb}, we obtain the following relation
\[
  \begin{split}
  \begin{pmatrix}
    E^{\frac{1}{4}}\ \phi(0) \\
    E^{-\frac{1}{4}}\ h\phi'(0)
  \end{pmatrix}
  &\,=\,
 \begin{pmatrix}
    1 & 1\\
    i & -i
 \end{pmatrix}
 \begin{pmatrix}
    1\,+\, \fpip_h& \fpim_h \\
    \fqip_h & 1+\fqim_h
 \end{pmatrix}^{-1}\\
 &   
 \cdot \begin{pmatrix}
    (1 + \fpep_h)\exp(\frac{-i\sigma_E}{h}) & 0 \\
    0 & (1+\fpem_h)\exp(\frac{+i\sigma_E}{h})
 \end{pmatrix}\\
 &\cdot \begin{pmatrix}
  1 + \frac{h}{2}g^+_h & -\frac{h}{2} g^-_h \\
  -\frac{h}{2}g^+_h & 1+\frac{h}{2}g^-_h 
  \end{pmatrix}^{-1}\begin{pmatrix}
    1 & 1\\
    i & -i
         \end{pmatrix}^{-1}
   \begin{pmatrix}
    (E-V(b))^{\frac{1}{4}}\ \phi(b) \\
    (E-V(b))^{-\frac{1}{4}}\ h\phi'(b)
  \end{pmatrix}.
\end{split}   
\]

We can thus rewrite
\[
  \begin{pmatrix}
    E^{\frac{1}{4}}\ \phi(0) \\
    E^{-\frac{1}{4}}\ h\phi'(0)
  \end{pmatrix}
    \,=\,
    \mathbb{M}_h(E)\,
    \begin{pmatrix}
    (E-V(b))^{\frac{1}{4}}\ \phi(b) \\
    (E-V(b))^{-\frac{1}{4}}\ h\phi'(b)
  \end{pmatrix}.
\]

We compute the matrix $\Mbb_h(E)$ to obtain
\[
    \Mbb_h(E)
    \,=\,
    \begin{pmatrix}
      \cos \frac{\sigma_E}{h} & -\sin \frac{\sigma_E}{h} \\
      \sin\frac{\sigma_E}{h} & \cos \frac{\sigma_E}{h}
    \end{pmatrix}
    \,+\cos \frac{\sigma_E}{h} \Abb_{h}^+(E)\,+\, \sin\frac{\sigma_E}{h}\Abb_{h}^-(E)
\]
where the matrices $\Abb_h(E)$ admit asymptotic expansions in $h$ with smooth coefficients in $E$.
By construction, the exponent set of the latter is contained in the sum $\Acal^{\mathrm{int}}\,+\,\Acal^{\mathrm{ext}}$.
But the exponent set should not depend on $\eps$. It follows that only powers that can be written
$m\gamma +n$ with $m\geq 1$ and $n\geq 0$ can have a non-zero coefficient, which concludes the proof.

\begin{rem}
For any fixed $W\in \Kcal$, the coefficients in the asymptotic expansions 
  are in principle computable and shown to be smooth for $E\in K$. Moreover, 
  the expansion is uniform for $(W,E)\in \Kcal\times K$.    
\end{rem} 
\section{Examples}
\label{sec:examples}
    \subsection{Singular Bohr-Sommerfeld rules}
    We apply the main theorem to obtain a singular Bohr Sommerfeld rule for the problem with Dirichlet
    boundary condition at $0$ and $b$.

    The eigenvalue equation in that case can be written :
    \[
    \Mbb_h(E)\,
    \begin{pmatrix}
      0 \\ 1
    \end{pmatrix}
    \,=\,
    \begin{pmatrix}
      0 \\ *
    \end{pmatrix}.
    \]
    It follows the following equation :
    \[
    \sin(\frac{\sigma_E}{h})\left(1\,+\,\mathfrak{a}^-_h(E)\right)\,+\,\cos(\frac{\sigma_E}{h})\mathfrak{a}^+_h(E)\,=\,0.
    \]
    where $\mathfrak{a}_{h}^\pm$ have asymptotic expansions with exponent set   given by 
    $${\{ m\gamma+n,~m\geq 0,n\geq 0 \}\setminus\{0\}}.$$

    This is easily transformed into the following Bohr-Sommerfeld rules.
    \[
    \frac{1}{h} \int_0^b (E-V(y))^{\frac{1}{2}}\,dy \,+\,\sum_{(m,n)\neq  (0,0)} h^{m\gamma +n} \theta_{m,n}(E)\,=\,k\pi.
    \]
    This is the standard form with two modifications: the integrand is not smooth at $0$, and the exponent set is not
    the integers.

    \subsection{Singular Bohr-Sommerfeld rules on the half-line}
    In this section, we consider a Schrödinger operator on the half-line $[0,+\infty)$.
    We assume that the potential $V$ is smooth on $(0,+\infty)$ and satifies the same assumptions as before on some interval $(0,c]$
    and we assume, for simplicity, that $V$ is increasing on $[c,+\infty)$. We also assume that the energy window $K$
    is such that $V^{-1}(K) \subset (0,c]$.

    \begin{rem}
      We could consider the more general setting described in \cite{hillairet2023eigenvalue}.
    \end{rem}

    For $E\in K$, let $G_h(\cdot ;E)$ be a $L^2$ solution to the eigenvalue equation on the half-line.
    By assumption, such $L^2$ solutions form a one-dimensional vector space. Fix $b<c$, such that $V^{-1}(K) \subset (b,c)$
    the analysis of the present paper gives a good understanding on $[0,b]$.

    On the interval $[b,+\infty)$, the potential is smooth so that the usual theory applies. In particular, we can use the Maslov Ansatz and look for $G_h$ as an oscillatory integral of the following form :
    \[
      G_h(x ; E)\,=\,\int e^{i (x\xi -F(\xi))} \sum_{k\geq 0} h^k b_k(\xi)\, d\xi. 
    \]
    A variant of the WKB method yields an eikonal equation for $F$, a homogenous transport equation for $b_0$ and inhomogenous transport equations for the $b_k, k\geq 1$. 

    Of course, the standard WKB construction near $b$ can be performed yielding a two-dimensional space of solutions. But the space of solutions that is $L^2$ near infinity is only $1$-dimensional. This difficulty is resolved by performing a stationary phase computation on the Maslov Ansatz. We obtain that there exists a unique solution, that is $L^2$ near infinity and such that 
    \begin{align*}
      & \begin{pmatrix}
        (E-V(b)^{\frac{1}{4}} G_h(b ; E)\\
        (E-V(b)^{-\frac{1}{4}} hG_h'(b ; E) 
      \end{pmatrix}
      =
      \cos\left( \frac{1}{h}\int_b^{+\infty} (E-V(y))_+^{\frac{1}{2}}\, dy -\frac{\pi}{4}\right )
        \cdot
        \begin{pmatrix}
          1+ \mathfrak{b}_h^+(E) \\
          \mathfrak{c}_h^+(E)\\
        \end{pmatrix}
        \\
       & \hspace{1cm} \,+\,\sin\left( \frac{1}{h}\int_b^{+\infty} (E-V(y))_+^{\frac{1}{2}}\, dy -\frac{\pi}{4}\right)
        \cdot
        \begin{pmatrix}
          \mathfrak{b}_h^-(E) \\
          1 + \mathfrak{c}_h^-(E)\\
        \end{pmatrix}
      \end{align*}
    where $\mathfrak{b}_h^{\pm}(E)$ and $\mathfrak{c}_h^{\pm}(E)$ are standard asymptotic expansions (with the positive integers as exponent set).   

    Combining with theorem \ref{thm:matching}, we obtain that there exist asymptotic expansions with exponent set
    $\{ m\gamma+n,~~m\geq 0, n\geq 0\}\setminus \{0\}$ such that following result holds:
    \begin{multline*}
      \begin{pmatrix}
        E^{\frac{1}{4}} G_h(0 ; E)\\
        E^{-\frac{1}{4}} hG_h'(0 ; E) 
      \end{pmatrix}
      \,=\,
      \cos\left( \frac{1}{h}\int_0^{+\infty} (E-V(y))_+^{\frac{1}{2}}\, dy -\frac{\pi}{4}\right )
        \cdot
        \begin{pmatrix}
          1+ \mathfrak{a}_h^+(E) \\
          \mathfrak{a}_h^+(E)\\
        \end{pmatrix}
        \\
        \,+\,\sin\left( \frac{1}{h}\int_0^{+\infty} (E-V(y))_+^{\frac{1}{2}}\, dy -\frac{\pi}{4}\right)
        \cdot
        \begin{pmatrix}
          \tilde{\mathfrak{a}}_h^-(E) \\
          1 + \tilde{\mathfrak{a}}_h^-(E)\\
        \end{pmatrix}.
      \end{multline*}

      \begin{rem}
        This formula allows one to recover the results of \cite{hillairet2023eigenvalue}
      \end{rem}

\bibliographystyle{alpha}
\bibliography{MMT-bib1}

\end{document}